\newtheorem{theorem}{Theorem}
\newtheorem{lemma}{Lemma}
\newtheorem{proposition}{Proposition}
\newcommand{\ri}{\ensuremath{\mathrm{i}}}
\newcommand{\E}{\ensuremath{E}}
\newcommand{\var}{\ensuremath{\operatorname{var}}}
\newcommand{\R}{\mathbb{R}}
\newcommand{\cov}{\ensuremath{\operatorname{cov}}}
\newcommand{\f}{\overline{f}}
\def\H{\mathrm{\scriptscriptstyle H}}
\def\T{\mathrm{\scriptscriptstyle T}}
\newcommand{\N}{\mathbb{N}}
\newcommand{\sn}{^{(n)}}
\newcommand{\btheta}{\theta}
\begin{document}
\title{The De-Biased Whittle Likelihood}
\author{Adam M. Sykulski$^1$, Sofia C. Olhede$^2$, Jonathan M. Lilly$^3$,\\Arthur P. Guillaumin$^2$ and Jeffrey J. Early$^3$}
\date{}
\maketitle
\begin{abstract}
The Whittle likelihood is a widely used and computationally efficient pseudo-likelihood. However, it is known to produce biased parameter estimates for large classes of models. We propose a method for de-biasing Whittle estimates for second-order stationary stochastic processes. The de-biased Whittle likelihood can be computed in the same $\mathcal{O}(n\log n)$ operations as the standard approach. We demonstrate the superior performance of the method in simulation studies and in application to a large-scale oceanographic dataset, where in both cases the de-biased approach reduces bias by up to two orders of magnitude, achieving estimates that are close to exact maximum likelihood, at a fraction of the computational cost. We prove that the method yields estimates that are consistent at an optimal convergence rate of $n^{-1/2}$, under weaker assumptions than standard theory, where we do not require that the power spectral density is continuous in frequency. We describe how the method can be easily combined with standard methods of bias reduction, such as tapering and differencing, to further reduce bias in parameter estimates. \\ \\ \textbf{Keywords:} Parameter estimation; Pseudo-likelihood; Periodogram; Aliasing; Blurring; Tapering; Differencing.
\let\thefootnote\relax\footnote{$^1$ Data Science Institute / Department of Mathematics and Statistics, Lancaster University, UK (email: a.sykuslki@lancaster.ac.uk)}
\let\thefootnote\relax\footnote{$^2$ Department of Statistical Science, University College London, UK}
\let\thefootnote\relax\footnote{$^3$ NorthWest Research Associates, Redmond WA, USA}
\end{abstract}
\section{Introduction}
This paper introduces an improved computationally-efficient method of estimating time series model parameters of second-order stationary processes. The standard approach is to maximize the exact time-domain likelihood, which in general has computational efficiency of order $n^2$ for regularly-spaced observations (where $n$ is the length of the observed time series) and produces estimates that are asymptotically efficient, converging at a rate of $n^{-1/2}$. A second approach is the method of moments, which in general has a computational efficiency of smaller order but with poorer statistical performance \citep[p.253]{brockwell2009time}, exhibiting both bias and often a higher variance. A third approach of approximating the exact likelihood, often referred to as {\em quasi}-, {\em pseudo-}, or {\em composite-likelihoods}, is receiving much recent attention across statistics, see e.g.~\cite{fan2014quasi} and \cite{guinness2017circulant}. In time series analysis, such likelihood approximations offer the possibility of considerable improvements in computational performance (usually scaling as order $n\log n$), with only small changes in statistical behaviour, see e.g.~\cite{dutta2015h} and \cite{anitescu2016inversion}. Here we introduce a pseudo-likelihood that is based on the Whittle likelihood \citep{whittle1953estimation} which we will show offers dramatic decreases in bias and mean-squared error in applications, yet with no significant increase in computational cost, and no loss in consistency or rate of convergence. We will refer to our pseudo-likelihood as the {\em de-biased} Whittle likelihood.

The Whittle likelihood of~\cite{whittle1953estimation} is a frequency-domain approximation to the exact likelihood. This method is considered a standard method in parametric spectral analysis on account of its order $n\log n$ computational efficiency \citep{choudhuri2004contiguity,fuentes2007approximate,matsuda2009fourier,krafty2013penalized,jesus2017inference}. However, it has been observed that the Whittle likelihood, despite its desirable asymptotic properties, may exhibit poor properties when applied to real-world, finite-length time series, particularly in terms of estimation bias~\citep{dahlhaus1988small,velasco2000whittle,contreras2006note}. Bias is caused by spectral blurring, sometimes referred to as {\em spectral leakage}~\citep{percival1993spectral}. Furthermore, when the time series model is specified in continuous time, but observed discretely, then there is the added problem of aliasing, which if unaccounted for will further increase bias in Whittle estimates. The challenge is to account for such sampling effects and {\em de-bias} Whittle estimates, while retaining the computational efficiency of the method. We here define such a procedure, which can be combined with tapering and appropriate differencing, as recommended by \cite{dahlhaus1988small} and \cite{velasco2000whittle}. This creates an automated procedure that incorporates all modifications simultaneously, without any hand-tuning or reliance on process-specific analytic derivations such as in~\cite{taniguchi1983second}.

We compare pseudo-likelihood approaches with simulated and real-world time series observations. In our example from oceanography, the de-biased Whittle likelihood results in parameter estimates that are significantly closer to maximum likelihood than standard Whittle estimates, while reducing the computational runtime of maximum likelihood by over a factor of 100, thus demonstrating the practical utility of our method. Additionally, the theoretical properties of our new estimator are studied under relatively weak assumptions, in contrast to~\cite{taniguchi1983second}, \cite{dahlhaus1988small}, and \cite{velasco2000whittle}. Taniguchi studies autoregressive processes that depend on a scalar unknown parameter. Dahlhaus studies processes whose spectral densities are the product of a known function with peaks that increase with sample size, and a latent spectral density that is twice continuously differentiable in frequency. Velasco and Robinson study processes that exhibit power-law behaviour at low frequencies and require continuous differentiability of the spectrum (at all frequencies except zero). Our assumptions on the spectral density of the time series will be milder. In particular, we will {\em not} require that the spectral density is continuous in frequency. Despite this, we are still able to prove consistency of de-biased Whittle estimates, together with a convergence rate matching the optimal $n^{-1/2}$.
\section{Definitions and Notation}
We shall assume that the stochastic process of interest is modelled in continuous time, however, the de-biased Whittle likelihood can be readily applied to discrete-time models, as we shall discuss later. We define $\{X_t\}$ as the infinite sequence obtained from sampling a zero-mean continuous-time real-valued process $X(t;\theta)$, where $\theta$ is a length-$p$ vector that specifies the process. That is, we let $X_t\equiv X(t\Delta;\theta)$, where $t$ is a positive or negative integer, $t=\ldots,-2,-1,0,1,2,\ldots$, and $\Delta>0$ is the sampling interval. If the process is second-order stationary, we define the autocovariance sequence by $s(\tau;\theta)\equiv\E\{X_tX_{t-\tau}\}$ for $\tau=\ldots,-2,-1,0,1,2,\ldots$, where $\E\{\cdot\}$ is the expectation operator. 
The power spectral density of  $\{X_t\}$ forms a Fourier pair with the autocovariance sequence, and is almost everywhere given by
\begin{equation}
f(\omega;\theta) = \Delta\sum_{\tau=-\infty}^\infty s(\tau;\theta)\exp(-\ri\omega\tau\Delta), \quad
s(\tau;\theta) = \frac{1}{2\pi}\int_{-\pi/\Delta}^{\pi/\Delta}f(\omega;\theta)\exp(\ri\omega\tau\Delta)d\omega.
\label{eq:covspec}
\end{equation}
As $\{X_t\}$ is a discrete sequence, its Fourier representation is only defined up to the Nyquist frequency $\pm\pi/\Delta$. Thus there may be departures between $f(\omega;\theta)$ and the continuous-time process spectral density, denoted as $\tilde{f}(\omega;\theta)$, which for almost all $\omega\in\mathbb{R}$ is given by
\begin{equation}
\tilde{f}(\omega;\theta)=\int_{-\infty}^\infty \tilde s(\lambda;\theta)\exp(-\ri\omega\lambda)d\lambda, \quad
\tilde s(\lambda;\theta) = \frac{1}{2\pi}\int_{-\infty}^\infty \tilde{f}(\omega;\theta)\exp(\ri\omega\lambda)d\omega.
\label{eq:covspec2}
\end{equation}
Here $\tilde s(\lambda;\theta)\equiv\E\{X(t)X(t-\lambda)\}$ (for $\lambda\in\mathbb{R}$) is the continuous-time process autocovariance, which is related to $s(\tau;\theta)$ via $\tilde s(\tau\Delta;\theta)=s(\tau;\theta)$, when $\tau$ is an integer. It follows that
\begin{equation}
\label{eq:alias}
f(\omega;\theta)= \sum_{k=-\infty}^{\infty } \tilde{f}\left(\omega + k \frac{2\pi }{\Delta} ;\theta\right), \quad \omega\in[-{\pi}/{\Delta},{\pi}/{\Delta}].
\end{equation}
Thus contributions to $\tilde{f}(\omega;\theta)$ outside of the range of frequencies $\pm\pi/\Delta$ are said to be {\em folded} or {\em wrapped} into $f(\omega;\theta)$. We have defined both $f(\omega;\theta)$ and $\tilde{f}(\omega;\theta)$, as both quantities are important in separating aliasing from other artefacts in spectral estimation. 

In addition to these theoretical quantities, we will also require certain quantities that are computed directly from a single length-$n$ sample $\{X_t\}_{t=1}^n$. A widely used, but statistically inconsistent, estimate of $f(\omega;\theta)$ is the periodogram, denoted $I(\omega)$, which is the squared absolute value of the Discrete Fourier Transform (DFT) defined as
\begin{equation}
I(\omega)\equiv\left|J(\omega)\right|^2,\quad J(\omega)\equiv\left(\frac{\Delta}{n}\right)^{1/2}\sum_{t=1}^nX_t\exp(-\ri\omega t\Delta),\label{eq:DFT}\quad \omega\in[-\pi/\Delta,\pi/\Delta].
\end{equation}
Note that $I(\omega)$ and $J(\omega)$ are taken to be properties of the observed realisation and are not regarded as functions of $\theta$.
\section{Maximum Likelihood and the Whittle Likelihood}\label{S:Whittle}
Consider the discrete sample $X=\left\{X\right\}_{t=1}^N$, which is organized as a length $n$ column vector. Under the assumption that $X$ is drawn from $X(t;\theta)$, the expected $n\times n$ autocovariance matrix is $C(\theta)\equiv\E\left\{XX^T\right\}$, where the superscript ``$T$'' denotes the transpose, and the components of $C(\theta)$ are given by $C_{ij}(\theta)=s\left(i-j;\theta\right)$. Exact maximum likelihood inference can be performed for Gaussian data by evaluating the log-likelihood \cite[p.254]{brockwell2009time} given by
\begin{equation}
\label{log-likelihood1}
\ell(\theta)\equiv-\log|C(\theta)|-X^T\,C^{-1}(\theta)\,X,
\end{equation}
where the superscript ``$-1$" denotes the matrix inverse, and $|C(\theta)|$ is the determinant of $C(\theta)$. We have removed additive and multiplicative constants not affected by $\theta$ in~\eqref{log-likelihood1}. The optimal choice of $\theta$ for our chosen model to characterize the sampled time series $X$ is then found by maximizing the likelihood function in~\eqref{log-likelihood1} leading to
\[
\hat{\theta} = \arg \max_{\theta\in \Theta} \ell(\theta),
\]
where $\Theta$ defines the parameter space of $\theta$. Because the time-domain maximum likelihood is known to have optimal properties, any other estimator will be compared with the properties of this quantity.

A standard technique for avoiding expensive matrix inversions is to approximate~\eqref{log-likelihood1} in the frequency domain, following the seminal work of \cite{whittle1953estimation}. This 
approach approximates $C(\theta)$ using a Fourier representation, and utilizes the special properties of Toeplitz matrices. Given the observed sampled time series $X$, the Whittle likelihood, denoted $\ell_W(\theta)$ is
\begin{equation}
\label{whittle_likelihood}
\ell_W(\theta)\equiv-\sum_{\omega\in \Omega}  \left\{\log \tilde f(\omega;\theta) +\frac{I(\omega)}{\tilde f(\omega;\theta)}\right\},
\end{equation}
where $\Omega$ is the set of discrete Fourier frequencies given by
\begin{equation}
\Omega \equiv (\omega_1,\omega_2,\ldots,\omega_n)=\frac{2\pi}{n\Delta}\left(-\lceil {n}/{2} \rceil +1,\ldots,-1,0,1,\ldots, \lfloor {n}/{2} \rfloor\right).
\label{fourier_frequencies}
\end{equation}
The subscript ``$W$" in $\ell_W(\theta)$ is used to denote ``Whittle." We have presented the Whittle likelihood in a discretized form here, as its usual integral representation must be approximated for finite-length time series. In general, if the summation in~\eqref{whittle_likelihood} is performed over subsets of $\Omega$ then the resulting procedure is {\em semi-parametric}.

The Whittle likelihood {\em approximates} the time-domain likelihood when all Fourier frequencies are used in~\eqref{fourier_frequencies}, i.e. $\ell(\theta)\approx \ell_W(\theta)$, and this statement can be made precise \citep{dzhaparidze1983spectrum}. Its computational efficiency is a significantly faster $\mathcal{O}(n\log n)$, versus $\mathcal{O}(n^2)$ for maximum likelihood, as the periodogram can be computed using the Fast Fourier Transform, thus explaining its popularity in practice.
\section{Modified pseudo-likelihoods}
The standard version of the Whittle likelihood~\eqref{whittle_likelihood} is calculated using the periodogram, $I(\omega)$. This spectral estimate, however, is known to be a biased measure of the continuous-time process's spectral density for finite samples, due to blurring and aliasing effects \citep{percival1993spectral}, as discussed in the introduction. Aliasing results from the discrete sampling of the continuous-time process to generate an infinite sequence, whereas blurring is associated with the truncation of this infinite sequence over a finite-time interval. The desirable properties of the Whittle likelihood rely on the {\em asymptotic} behaviour of the periodogram for large sample sizes. The bias of the periodogram for {\em finite} samples however, will translate into biased parameter estimates from the Whittle likelihood, as has been widely reported (see e.g.~\cite{dahlhaus1988small}).

In this section we propose a modified version of the Whittle likelihood in Section~\ref{Whittle-real} which de-biases Whittle estimates. Furthermore, tapering and differencing are two well-established methods for improving Whittle estimates \citep{dahlhaus1988small,velasco2000whittle}. In Sections~\ref{Whittle-taper} and \ref{SS:Differencing} we respectively outline how the de-biased Whittle likelihood can be easily combined with either of these procedures.

\subsection{The de-biased Whittle likelihood} \label{Whittle-real}
We introduce the following pseudo-likelihood function given by
\begin{eqnarray}
\label{discrete_fourier_likelihood}
\ell_D(\theta)&\equiv&-\sum_{\omega\in \Omega}  \left\{\log \overline{f}_n(\omega;\theta) +\frac{I(\omega)}{\overline{f}_n(\omega;\theta)}\right\},
\\
\label{FejerKernel}
\overline{f}_n(\omega;\theta)&\equiv&\int_{-\pi/\Delta}^{\pi/\Delta} f(\nu;\theta){\cal F}_{n,\Delta}\left(\omega-\nu\right)\;d\nu, \quad
{\cal F}_{n,\Delta}(\omega)\equiv\frac{\Delta}{2\pi n}\frac{\sin^2( n \omega\Delta/2)}{\sin^2(\omega\Delta/2)},
\end{eqnarray}
where the subscript ``$D$" stands for ``de-biased." Here $\tilde f(\omega;\theta)$ in~\eqref{whittle_likelihood} has been replaced by $\overline{f}_n(\omega;\theta)$, which is the {\em expected} periodogram, and may be shown to be given by the convolution of the true modelled spectrum with the Fej\'er kernel ${\cal F}_{n,\Delta}(\omega)$, such that $\overline{f}_n(\omega;\theta)\equiv\E\{I(\omega)\}$ \citep{bloomfield2004fourier}. We call~\eqref{discrete_fourier_likelihood} the \textit{de-biased} Whittle likelihood, where the set $\Omega$ is defined as in~\eqref{fourier_frequencies}.

Replacing the true spectrum $\tilde f(\omega;\theta)$ with the expected periodogram $\overline{f}_n(\omega;\theta)$ in~\eqref{discrete_fourier_likelihood} is a straightforward concept, however, our key innovation lies in formulating its efficient computation without losing $\mathcal{O}(n\log n)$ efficiency. If we directly use~\eqref{FejerKernel}, then this convolution would usually need to be approximated numerically, and could be computationally expensive. Instead we utilize the convolution theorem to express the frequency-domain convolution as a time-domain multiplication \citep[p.198]{percival1993spectral}, such that
\begin{eqnarray}
\overline{f}_n(\omega;\theta)=2\Delta\cdot\Re\left\{\sum_{\tau=0}^{n-1}\left(1-\frac{\tau}{n}\right)s(\tau;\theta)\exp(-\ri\omega\tau\Delta)\right\}-\Delta\cdot s(0;\theta),
\label{eq:meanperiodogram}
\end{eqnarray}
where $\Re\{\cdot\}$ denotes the real part. Therefore $\overline{f}_n(\omega;\theta)$ can be exactly computed at each Fourier frequency directly from $s(\tau;\theta)$, for $\tau=0,\ldots,n-1$, by using a Fast Fourier Transform in $\mathcal{O}(n \log n)$ operations. Care must be taken to subtract the variance term, $\Delta\cdot s(0;\theta)$, to avoid double counting contributions from $\tau=0$. Both aliasing and blurring effects are automatically accounted for in~\eqref{eq:meanperiodogram} in one operation; aliasing is accounted for by sampling the theoretical autocovariance function at discrete times, while the effect of blurring is accounted for by the truncation of the sequence to finite length, and the inclusion of the triangle function $\left(1-{\tau}/{n}\right)$ in the expression. 

The de-biased Whittle likelihood can also be used for discrete-time process models, as~\eqref{eq:meanperiodogram} can be computed from the theoretical autocovariance sequence of the discrete process in exactly the same way. Furthermore, the summation in~\eqref{discrete_fourier_likelihood} can be performed over a reduced range of frequencies to perform {\em semi-parametric} inference. If the analytic form of $s(\tau;\theta)$ is unknown or expensive to evaluate, then it can be approximated from the spectral density using Fast Fourier Transforms, thus maintaining $\mathcal{O} (n\log n)$ computational efficiency.

As an aside, we point out that computing the standard Whittle likelihood of~\eqref{whittle_likelihood} with the aliased spectrum $f(\omega;\theta)$ defined in~\eqref{eq:covspec}, without accounting for spectral blurring, would in general be more complicated than using the expected periodogram $\overline{f}_n(\omega;\theta)$. This is because the aliased spectrum $f(\omega;\theta)$ seldom has an analytic form for continuous processes, and must be instead approximated by either explicitly wrapping in contributions from $\tilde f(\omega;\theta)$ from frequencies higher than the Nyquist as in~\eqref{eq:alias}, or via an approximation to the Fourier transform in~\eqref{eq:covspec}. This is in contrast to the de-biased Whittle likelihood, where the effects of aliasing and blurring have been computed exactly in one single operation using~\eqref{eq:meanperiodogram}.  Thus addressing aliasing and blurring together using the de-biased Whittle likelihood is simpler and computationally faster to implement than accounting for aliasing alone.

\subsection{The de-biased tapered Whittle likelihood}\label{Whittle-taper}
To ameliorate spectral blurring of the periodogram, a standard approach is to pre-multiply the data sequence with a weighting function known as a data taper \citep{thomson1982spectrum}. The taper is chosen to have spectral properties such that {\em broadband} blurring will be minimized, and the variance of the spectral estimate at each frequency is reduced, although the trade-off is that tapering increases {\em narrowband} blurring as the correlation between neighbouring frequencies increases.

The {\em tapered} Whittle likelihood~\citep{dahlhaus1988small} corresponds to replacing the direct spectral estimator formed from $I(\omega)$ in~\eqref{eq:DFT} with one using the taper $h=\{h_t\}$
\begin{equation}
J(\omega;h)\equiv\Delta^{1/2}\sum_{t=1}^nh_tX_t\exp(-\ri\omega t\Delta), \quad I(\omega;h)\equiv\left|J(\omega;h)\right|^2,\quad \sum_{t=1}^nh_t^2=1,
\label{eq:sxh}
\end{equation}
where $h_t$ is real-valued. Setting $h_t=1/n^{1/2}$ for $t=1,\ldots n$ recovers the periodogram estimate of~\eqref{whittle_likelihood}. To estimate parameters we then maximize
\begin{equation}
\label{whittle_likelihood_taper}
\ell_T(\theta)\equiv-\sum_{\omega\in \Omega}  \left\{\log \tilde f(\omega;\theta) +\frac{I(\omega;h)}{\tilde f(\omega;\theta)}\right\},
\end{equation}
where the subscript ``$T$" denotes that a taper has been used. \cite{velasco2000whittle} demonstrated that for certain discrete processes it is beneficial to use this estimator, rather than the standard Whittle likelihood, for parameter estimation, particularly when the spectrum exhibits a high dynamic range. Nevertheless, tapering in itself will not remove all broadband blurring effects in the likelihood, because we are still comparing the tapered spectral estimate against the theoretical spectrum, and not against the expected tapered spectrum. Furthermore, there remain the issues of narrowband blurring, as well as aliasing effects with continuous sampled processes. 

A useful feature of our de-biasing procedure is that it can be naturally combined with tapering. To do this we define the likelihood given by
\begin{eqnarray}
\label{discrete_fourier_likelihood_taper}
\ell_{TD}(\theta)&\equiv&-\sum_{\omega\in \Omega}  \left\{\log \overline{f}_n(\omega;h,\theta)
+\frac{I(\omega;h)}{\overline{f}_n(\omega;h,\theta)}\right\},
\\ \nonumber
\overline{f}_n(\omega;h,\theta)&\equiv&\int_{-\pi/\Delta}^{\pi/\Delta} f(\nu;\theta){\cal H}_{\Delta}\left(\omega-\nu\right)\;d\nu,  \quad {\cal H}_{\Delta}(\omega)\equiv\Delta \left|\sum_{t=1}^n h_t \exp(-\ri\omega t \Delta) \right|^2,
\end{eqnarray}
with $I(\omega;h)$ as defined in~\eqref{eq:sxh} such that  $\overline{f}_n(\omega;h,\theta)\equiv\E\{I(\omega;h)\}$. We call $\ell_{TD}(\theta)$ the \textit{de-biased tapered Whittle likelihood} and $\overline{f}_n(\omega;h,\theta)$ the {\em expected tapered spectrum} which can be computed exactly and efficiently using a similar $\mathcal{O}(n\log n)$ calculation to~\eqref{eq:meanperiodogram} to find $\overline{f}_n(\omega;h,\theta)$ such that
\[
\overline{f}_n(\omega;h,\theta)=2\Delta\cdot\Re\left\{\sum_{\tau=0}^{n-1}s(\tau;\theta)\left(\sum_{t=1}^{n-\tau}h_th_{t+\tau}\right) \exp(-\ri\omega\tau\Delta)\right\}-\Delta\cdot s(0;\theta).
\]
Accounting for the particular taper used in $\overline{f}_n(\omega;h,\theta)$ accomplishes de-biasing of the tapered Whittle likelihood, just as using the expected periodogram does for the standard Whittle likelihood. The time-domain kernel $\sum_{t=1}^{n-\tau}h_th_{t+\tau}$ can be pre-computed  using FFTs or using a known analytical form. Then during optimization, an FFT of this fixed kernel multiplied by the autocovariance sequence is taken at each iteration. Thus the de-biased tapered Whittle likelihood is also an $\mathcal{O}(n\log n)$ pseudo-likelihood estimator.

Both the de-biased tapered and de-biased periodogram likelihoods have their merits, but the trade-offs are different with {\em nonparametric} spectral density estimation than they are with {\em parametric} model estimation. Specifically, although tapering {\em decreases} the variance of nonparametric estimates at each frequency, it conversely may {\em increase} the variance of estimated parameters. This is because the taper is reducing degrees of freedom in the data, which increases correlations between local frequencies. On the other hand, the periodogram creates broadband correlations between frequencies, especially for processes with a high dynamic range, which also contributes to variance in parameter estimates. We explore these trade-offs in greater detail in Section~\ref{S:Simulations}.

\subsection{The de-biased Whittle likelihood with differenced data}\label{SS:Differencing}
Another method of reducing the effects of blurring on Whittle estimates is to fit parameters to the differenced process. This was illustrated in \cite{velasco2000whittle}, where the Whittle likelihood was found to perform poorly with fractionally integrated processes that exhibited higher degrees of smoothness, but improved when working with the differenced process as this reduced the dynamic range of the spectrum and hence decreased broadband blurring. 

Whittle likelihood using the differenced process proceeds as follows. Define $Y(t;\theta) \equiv X(t+\Delta;\theta)-X(t;\theta)$ for the continuous-time differenced process, and $Y_t = X_{t+1}-X_{t}$ for the sampled process. The spectral density of $Y(t;\theta)$, denoted $\tilde f_{Y}(\omega;\theta)$, can be found from $\tilde f(\omega;\theta)$ via the relationship
\[
\tilde f_{Y}(\omega;\theta)=4\sin^2\left(\frac{\omega\Delta}{2}\right)\tilde f(\omega;\theta).
\]
Then the Whittle likelihood for differenced processes is performed by maximizing
\begin{equation}
\ell_W(\theta)\equiv-\sum_{\omega\in \Omega_Y}  \left\{\log \tilde f_{Y}(\omega;\theta) +\frac{I_{Y}(\omega)}{\tilde f_{Y}(\omega;\theta)}\right\},
\label{whittle_differenced}
\end{equation}
where $I_{Y}(\omega)$ is the periodogram of the sample $\{Y_t\}_{t=1}^{n-1}$. The set of Fourier frequencies $\Omega_Y$ are now ${2\pi} (-\lceil (n-1)/{2} \rceil +1,\ldots,-1,1,\ldots, \lfloor (n-1)/{2} \rfloor)/{(n-1)\Delta}$, where one degree of freedom has been lost by differencing, and a second has been lost as the zero frequency should be excluded because the spectral density is now equal to zero here. The de-biased Whittle likelihood is also straightforward to compute from $\{Y_t\}_{t=1}^{n-1}$ over the same set of Fourier frequencies $\Omega_Y$
\begin{align}
\label{debiased_differenced}
\ell_D(\theta)&\equiv-\sum_{\omega\in \Omega_Y}  \left\{\log \overline{f}_{n,Y}(\omega;\theta) +\frac{I_{Y}(\omega)}{\overline{f}_{n,Y}(\omega;\theta)}\right\},\\
\overline{f}_{n,Y}(\omega;\theta)&=2\Delta\cdot\Re\left\{\sum_{\tau=0}^{n-2}\left(1-\frac{\tau}{n}\right)s_{Y}(\tau;\theta)\exp(-\ri\omega\tau\Delta)\right\}-\Delta\cdot s_{Y}(0;\theta),\nonumber
\end{align}
where  $\overline{f}_{n,Y}(\omega;\theta)\equiv\E\{I_Y(\omega)\}$ and $s_{Y}(\tau;\theta)$ is the autocovariance of $Y_t$ where
\[
s_{Y}(\tau;\theta)=2s(\tau;\theta)-s(\tau+1;\theta)-s(\tau-1;\theta),
\]
from direct calculation. This likelihood remains an $\mathcal{O}(n\log n)$ operation to evaluate, as computing all required $s_{Y}(\tau;\theta)$ from $s(\tau;\theta)$ is $\mathcal{O}(n)$, and the rest of the calculation is as in~\eqref{eq:meanperiodogram}. Differencing and tapering can be easily combined in $\mathcal{O}(n\log n)$, with both the standard and de-biased Whittle likelihoods. Furthermore, differencing can be applied multiple times if required.

To see how differencing can reduce the variance of the estimators, we investigate the variance of the score of the de-biased Whittle likelihood, which (as derived in equation~\eqref{dynamic_range} of the Appendix material) can be bounded for each scalar $\theta_i$ by
\begin{equation}\label{e:scorebound}
\var\left\{\frac{1}{n}\frac{\partial }{\partial \theta_i}\ell_D(\theta)\right\}\ \leq \frac{f^2_{\max} \|\frac{\partial \overline{f}_n}{\partial \theta_i}\|^2_{\infty} }{n f_{\min}^4},
\end{equation}
where $\|{\partial \overline{f}_n}/{\partial \theta_i}\|^2_{\infty}$ is the upper bound on the partial derivative of the expected periodogram with respect to $\theta_i$, and $f_{\min}$ and $f_{\max}$ are upper and lower bounds on the spectral density respectively (assumed finite and non-zero). The significance of~\eqref{e:scorebound} is that the bound on the variance of the score is controlled by the dynamic range of the spectrum, as a high dynamic range will lead to large values of $f_{\max}^2 / f_{\min}^4$. This suggests that one should difference a process with steep spectral slopes, as this will typically reduce the dynamic range of the spectrum thus reducing $f_{\max}^2 / f_{\min}^4$, in turn decreasing the bound on the variance. Differencing multiple times however may eventually send $f_{\min}$ to zero, such that at some point the ratio will increase, in turn increasing the bound on the variance.
\section{Monte-Carlo Simulations}\label{S:Simulations}
All simulation results in Sections~\ref{S:Simulations} and \ref{S:Application} can be exactly reproduced in MATLAB, and all data can be downloaded, using the software available at \url{www.ucl.ac.uk/statistics/research/spg/software}. As part of the software we provide a simple package for estimating the parameters of any time series observation modelled as a second-order stationary stochastic process specified by its autocovariance.
\subsection{Comparing the standard and de-biased Whittle likelihood}
In this section we investigate the performance of the standard and de-biased Whittle likelihoods in a Monte Carlo study using observations from a Mat\'ern process~\citep{matern1960spatial}, as motivated by the simulation studies of \cite{anitescu2012matrix} who study the same process. The Mat\'ern process is a three-parameter continuous Gaussian process defined by its continuous-time unaliased spectrum
\begin{equation}\label{eq:matern}
\tilde f(\omega)=\frac{A^2}{(\omega^2+c^2)^\alpha}.
\end{equation}
The parameter $A$ sets the magnitude of the variability, $1/c>0$ is the damping timescale, and $\alpha>1/2$ controls the rate of spectral decay, or equivalently the smoothness or differentiability of the process. For large $\alpha$ the power spectrum exhibits a high dynamic range, and the periodogram will be a poor estimator of the spectral density due to blurring. Conversely, for small $\alpha$ there will be departures between the periodogram and the continuous-time spectral density because of aliasing. We will therefore investigate the performance of estimators over a range of $\alpha$ values. 
\begin{figure}
\centering{
\includegraphics[width=0.95\textwidth,trim={2cm 1cm 2cm 1.2cm},clip]{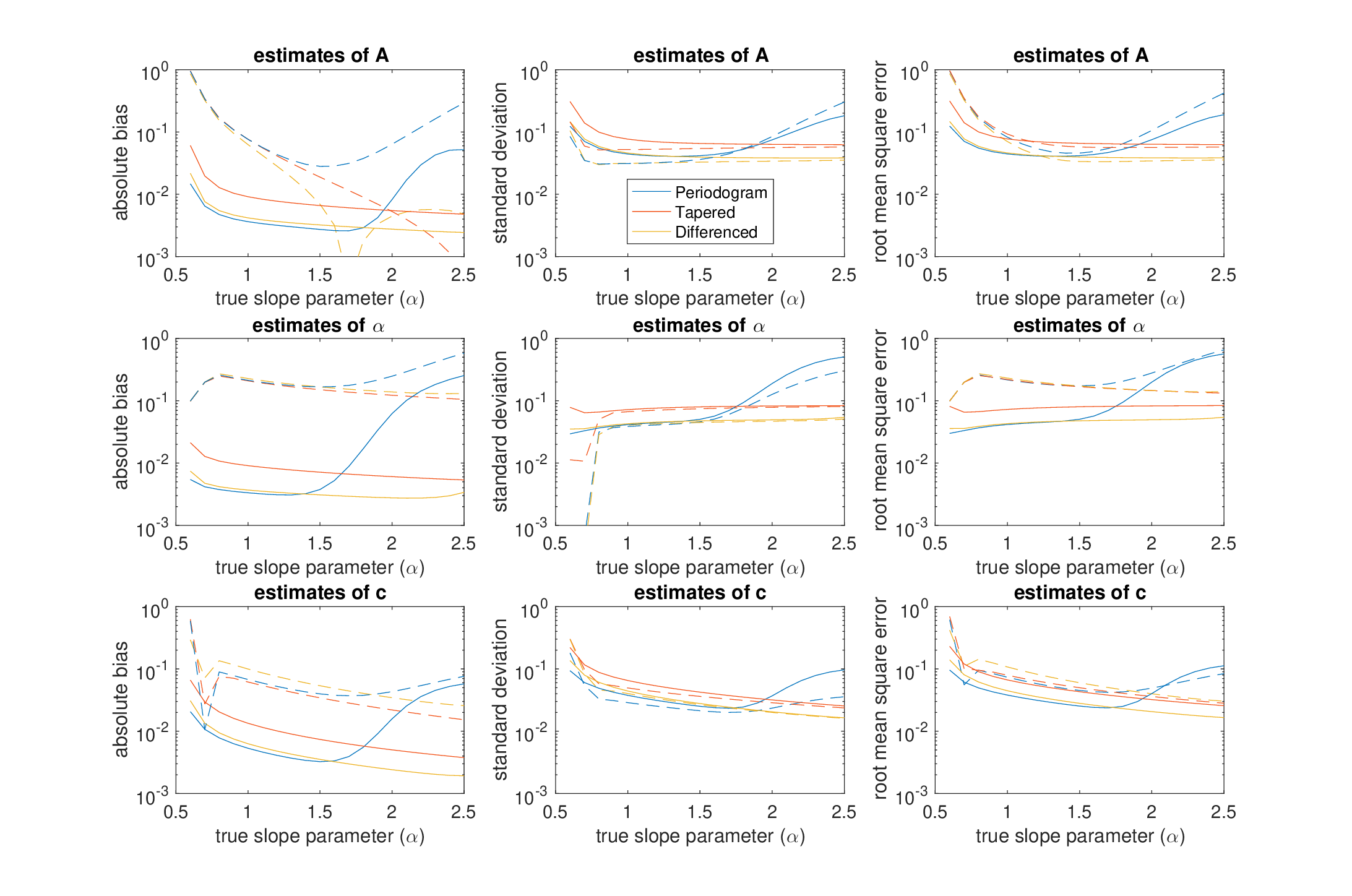}}
\caption{\label{Whittle2Fig}Absolute bias (left column), standard deviation (centre column) and root-mean-square-error (right column) of Mat\'ern parameter estimates using different forms of the standard and de-biased Whittle likelihoods. The colours correspond to different approaches: periodogram, tapered, or differenced (as indicated by the legend in the top centre panel). For each approach, the dashed lines are for standard Whittle approaches, and the solid lines are the corresponding de-biased approaches. The top row corresponds to estimates for the amplitude parameter $A$, the second row for the slope parameter $\alpha$, and the bottom row for the damping parameter $c$, as given in~\eqref{eq:matern}. In all panels the results are over a range of $\alpha$ values in increments of 0.1 from 0.6 to 2.5, translating to a spectral slope between $\omega^{-1.2}$ and $\omega^{-5}$.}
\end{figure}

In Fig.~\ref{Whittle2Fig} we display the bias and standard deviation of the different Whittle estimators for the three parameters $\{A,\alpha,c\}$ where $\alpha$ varies from $[0.6,2.5]$ in intervals of $0.1$. We fix $A=1$ and $c=0.2$, but estimate all three parameters assuming they are unknown. For each value of $\alpha$, we simulate 10,000 time series each of length $n=1000$, and use these replicated series to calculate biases and standard deviations for each estimator. We implement several different Whittle estimators: standard Whittle likelihood~\eqref{whittle_likelihood}, tapered Whittle likelihood~\eqref{whittle_likelihood_taper}, and differenced Whittle likelihood~\eqref{whittle_differenced}. In addition, for each of these we implement the de-biased version (equations \eqref{discrete_fourier_likelihood}, \eqref{discrete_fourier_likelihood_taper}, and \eqref{debiased_differenced}, respectively). The choice of data taper is the Discrete Prolate Spheroidal Sequence (DPSS) taper \citep{slepian1961prolate}, with bandwidth parameter equal to 4, where performance was found to be broadly similar across different choices of bandwidth (not shown). We also performed fits using combined differenced and tapered versions of both the standard and de-biased Whittle likelihoods, as discussed in Section~\ref{SS:Differencing}, and found that results were virtually identical to tapering without differencing (also not shown). The optimization is performed in MATLAB using \texttt{fminsearch}, and uses identical settings for all likelihoods. Initialized guesses for the slope and amplitude are found using a least squares fit in the range $[\pi/4\Delta,3\pi/4\Delta]$, and the initial guess for the damping parameter $c$ is set at a mid-range value of 100 times the Rayleigh frequency (i.e. $c=100\pi/n=\pi/10$.)

The first column in Fig.~\ref{Whittle2Fig} displays the absolute bias of each estimator for the three Mat\'ern parameters. The absolute biases are displayed on a log10 scale, and in many instances we can see bias reductions of over a factor of 10 in each of the parameters, representing over a 90\% bias reduction. The ``U'' shape over the range of $\alpha$ values, with the standard Whittle likelihood using the periodogram, corresponds to aliasing effects for small $\alpha$ and blurring effects for large $\alpha$. Differencing and tapering ameliorate the blurring effects for high $\alpha$, but not the aliasing effects for low $\alpha$. De-biased methods, particularly when combined with differencing, remove bias most consistently across the full range of $\alpha$ values.

The second column displays the standard deviations of the estimates. These are broadly comparable between the methods, although methods that use the periodogram suffer from reduced performance for high $\alpha$ due to broadband spectral blurring. The dip when estimating $\alpha$ for low values with standard methods is due to boundary effects in the optimization. Here the estimate of $\alpha$ cannot go below 0.5, and due to severe aliasing, the optimization typically converges to this lower bound when the true $\alpha$ is less than 0.7, such that the estimate is badly biased, but not variable. As we have used 10,000 replicates, the standard error of the reported biases can be observed by eye, by dividing the corresponding standard deviations by $10000^{1/2}=100$, meaning that the observed bias reductions using our approach appear highly significant.

The final column displays the root-mean-square-error (RMSE), thus combining information from the first two columns. With standard methods, the observed biases are in general larger than the standard deviations, so the shapes of the RMSE curves generally follow those of the biases. The de-biased methods are significantly less biased such that standard deviation is now the main contribution to RMSE. Overall, because bias tends to dominate variance with the standard
Whittle estimates, the de-biased methods improve upon the standard methods with only a few exceptions, and can reduce error by an order of magnitude.

\setlength{\tabcolsep}{6pt} 
\begin{table}
\caption{\label{aggregate}
Aggregated results from Fig.~\ref{Whittle2Fig}, averaging the percentage bias (relative to the true parameter values), standard deviation (SD), and root-mean-square-error (RMSE) across all estimates of $\{A,\alpha,c\}$, over the full range of $\alpha$ considered}
\centering
\begin{tabular}{lcr@{.}lr@{.}lr@{.}lc}
Inference Method & Eqn & \multicolumn{2}{c}{Bias} & \multicolumn{2}{c}{SD} & \multicolumn{2}{c}{RMSE}     \\
Standard Whittle (periodogram) & \eqref{whittle_likelihood} & 23&69\% & 10&34\%  & 26&66\% \\
De-Biased Whittle (periodogram) & \eqref{discrete_fourier_likelihood} & 3&96\% & 12&97\%  & 13&75\% \\
Standard Whittle (tapered) & \eqref{whittle_likelihood_taper} & 18&11\% & 12&23\%  & 23&12\% \\
De-Biased Whittle (tapered) & \eqref{discrete_fourier_likelihood_taper} & 2&60\% & 14&15\%  & 14&41\% \\
Standard Whittle (differenced) & \eqref{whittle_differenced} & 18&99\% & 9&33\%  & 22&09\% \\
De-Biased Whittle (differenced) & \eqref{debiased_differenced} & 1&19\% & 8&90\%  & 8&99\% \\
\end{tabular}
\end{table}

Finally, in Table~\ref{aggregate} we aggregate all information in Fig.~\ref{Whittle2Fig} to provide the average percentage bias (relative to the true parameter values), standard deviation, and RMSE for each likelihood estimator and each parameter combination. Of all the estimators, the de-biased Whittle likelihood using the differenced process performs best. Overall, of the three modifications to the standard Whittle likelihood---de-biasing, tapering and differencing---the de-biasing method proposed here is the single procedure that yields the greatest overall improvement in parameter estimation.

\subsection{Comparison with time-domain estimators}
Time-domain $\mathcal{O}(n\log n)$ pseudo-likelihood procedures have been proposed in \cite{anitescu2012matrix} (see also \cite{dutta2015h} and \cite{anitescu2016inversion}) who use Hutchinson trace estimators and circulant embedding techniques, removing the need to calculate a matrix inverse or determinant. We contrast the approach proposed here with that of \cite{anitescu2012matrix} using the MATLAB package ``ScalaGauss" supplied by those authors at \url{http://press3.mcs.anl.gov/scala-gauss/software/}. We use the same parameters of their example code for a Mat\'ern process, which estimates the damping parameter, and assumes the slope parameter is known, and the amplitude parameter is known up to a proportion of the damping parameter. The parameters used, when transformed into the form of~\eqref{eq:matern}, are $A=1.7725c,\alpha=1.5,c=0.0197$ and $n=1,024$. As the slope parameter is high, we fit the de-biased Whittle to the differenced process. We perform 10,000 repeats and report the results in Table~\ref{Table:MC3}. We include results for maximum likelihood, standard Whittle likelihood, and standard and de-biased tapered likelihoods.

Standard Whittle likelihood performs extremely poorly due to the blurring effects of using the periodogram. The de-biased Whittle likelihood and the method of \cite{anitescu2012matrix} return estimation errors that are very close to maximum likelihood. The method of \cite{anitescu2012matrix} however, requires an order of magnitude more processing time than the de-biased and standard Whittle likelihood. This is because the method involves many more steps in the procedure, such as generating random numbers to form the Hutchinson trace estimators. To speed up the \cite{anitescu2012matrix} method, we have included results with a modified version which uses only one Hutchinson trace estimator (as opposed to the 50 used in the example code). The method still remains slower than the de-biased Whittle likelihood, and now yields slightly worse estimation accuracy. The de-biased method appears to be the best method at combining the fit quality of time-domain maximum likelihood, with the speed of the standard Whittle method.

\setlength{\tabcolsep}{4pt} 
\begin{table}
\caption{\label{Table:MC3}
Percentage bias, standard deviation (SD), and root mean squared error (RMSE) of different methods when estimating the damping parameter, $\alpha$ of a Mat\'ern process. The experiment is repeated over 10,000 independently generated Mat\'ern series of length $n=1,024$ with parameters, $A=1.7725c,\alpha=1.5,c=0.0197$. CPU times are as performed on a 2.8 GHz Intel Core i7 processor}
\centering
\begin{tabular}{lcr@{.}lr@{.}lr@{.}lc}
Inference Method & Eqn & \multicolumn{2}{c}{Bias} & \multicolumn{2}{c}{SD} & \multicolumn{2}{c}{RMSE}  & CPU (sec.)    \\
Maximum likelihood & \eqref{log-likelihood1}  & 0&029\% & 2&204\%  & 2&204\%  &  4.257 \\
Standard Whittle (periodogram) & \eqref{whittle_likelihood} & 107&735\% & 101&357\%  & 147&916\% & 0.139 \\
De-Biased Whittle (differenced) & \eqref{debiased_differenced} & 0&030\% & 2&212\%  & 2&212\% & 0.157 \\
Standard Whittle (tapered) & \eqref{whittle_likelihood_taper} & 25&550\% & 20&459\%  & 32&731\% & 0.168 \\
De-Biased Whittle (tapered) & \eqref{discrete_fourier_likelihood_taper} & 0&023\% & 2&558\%  & 2&558\% & 0.198 \\
Anitescu et al. (normal version) &  & 0&029\% & 2&205\%  & 2&205\% & 1.998 \\
Anitescu et al. (faster version) &  & 0&035\% & 2&223\%  & 2&223\% & 0.438 \\
\end{tabular}
\end{table}
\section{Application to Large-Scale Oceanographic Data}\label{S:Application}
In this section we examine the performance of our method when applied to a real-world large-scale dataset, by analysing data obtained from the Global Drifter Program, which maintains a publicly-downloadable database of position measurements obtained from freely-drifting satellite-tracked instruments known as drifters (\texttt{http://www.aoml.noaa.gov/phod/dac/index.php}). In total over 23,000 drifters have been deployed, with interpolated six-hourly data available since 1979 and one-hourly data since 2005 (see \cite{elipot2016global}), with over 100 million data points available in total. The collection of such data is pivotal to the understanding of ocean circulation and its impact on the global climate system (see \cite{griffa2007lagrangian}); it is therefore essential to have computationally efficient methods for their analysis.

In Fig.~\ref{DrifterFig}, we display 50-day position trajectories and corresponding velocity time series for three drifters from the one-hourly data set, each from a different major ocean. These trajectories can be considered as complex-valued time series, with the real part corresponding to the east/west velocity component and the imaginary part corresponding to the north/south velocity component. We then plot the periodogram of the complex-valued series, which has different power at positive and negative frequencies, distinguishing directions of rotation on the complex plane \citep{schreier2010statistical}. The de-biased Whittle likelihood for complex-valued proper processes is exactly the same as~\eqref{discrete_fourier_likelihood}--\eqref{eq:meanperiodogram} (see also \cite{sykulski2016lagrangian}), where the autocovariance sequence of a complex-valued process $Z_t$ is $s(\tau;\theta)=\E\{Z_tZ^\ast_{t-\tau}\}$. For proper processes the complementary covariance is $r(\tau;\theta)=\E\{Z_tZ_{t-\tau}\}=0$ at all lags \citep{schreier2010statistical}, and can thus be ignored in the likelihood, as $s(\tau;\theta)$ captures all second-order structure in the zero-mean process.

We model the velocity time series as a complex-valued Mat\'ern process, with power spectral density given in~\eqref{eq:matern}, as motivated by~\cite{sykulski2016lagrangian} and~\cite{lilly2016fractional}. To account for a type of circular oscillations in each time series known as inertial oscillations, which create an off-zero spike on one side of the spectrum, we fit the Mat\'ern process semi-parametrically to the opposite ``non-inertial" side of the spectrum (as displayed by the red-solid line in the figure). We overlay the fit of the de-biased Whittle likelihood to the periodograms in Fig.~\ref{DrifterFig}. For a full parametric model of surface velocity time series, see~\cite{sykulski2016lagrangian}. We have selected drifters without noticeable tidal effects; for de-tiding procedures see \cite{pawlowicz2002classical}.

\begin{figure}
\centering{
\includegraphics[width=0.9\textwidth]{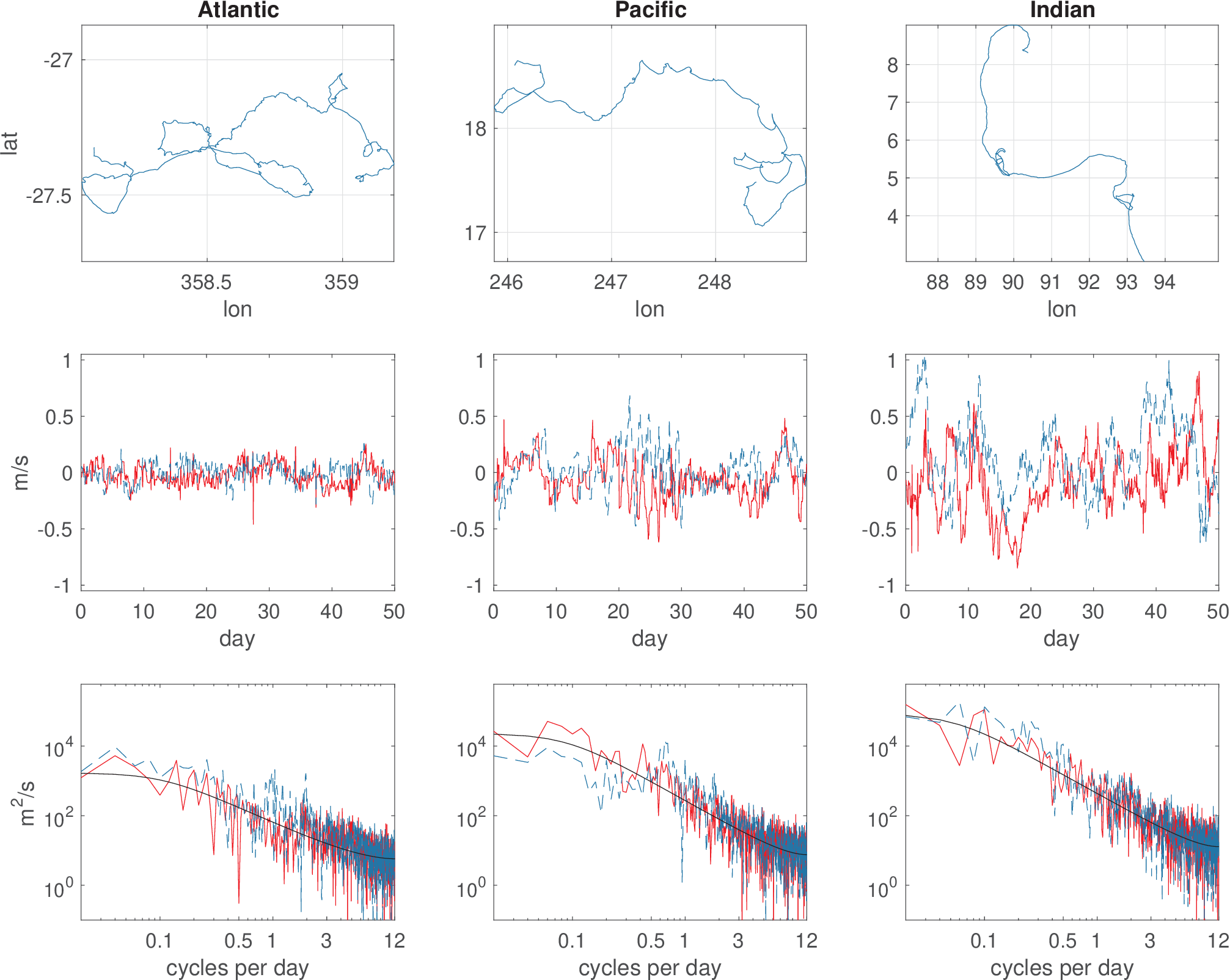}}
\caption{\label{DrifterFig}The top row displays 50-day trajectories of Drifter IDs: \#2339255 (Atlantic Ocean), \#49566 (Pacific Ocean), and \#43577 (Indian Ocean). The second row displays the east/west (red-solid) and north/south (blue-dashed) velocity time series for each trajectory. The third row displays the periodograms of the complex-valued velocity series, with the non-inertial side of the spectrum in red-solid, and the inertial side in blue-dashed. The expected periodogram, $\overline{f}_n(\omega;{\hat\theta})$, from the de-biased Whittle likelihood is overlaid in black.}
\end{figure}

\begin{table}
\caption{\label{DrifterTable}
Estimated Mat\'ern parameters using the maximum, de-biased Whittle, and standard Whittle, likelihoods for the velocity time series of Fig.~\ref{DrifterFig}. The parameters are given in terms of the damping timescale ($1/c$), the slope ($2\alpha$) and the diffusivity $(\kappa)$. CPU times are as performed on a 2.8 GHz Intel Core i7 processor}
\centering
\begin{tabular}{llr@{.}llr@{.}lr@{.}l}
Drifter & Inference & \multicolumn{2}{l}{Damping} & {Slope\quad} & \multicolumn{2}{l}{Diffusivity}  & \multicolumn{2}{l}{CPU (s)} \\
location & method & \multicolumn{2}{l}{(days)} & & \multicolumn{2}{l}{(m$^2$/s$\times 10^3$)}\quad & \multicolumn{2}{l}{ }  \vspace{2mm} \\
& Maximum likelihood\quad\quad &  10&65 & 1.460  & 0&49  &  20&89 \\
Atlantic & De-biased Whittle & 9&84 & 1.462  & 0&44 & 0&06 \\
& Standard Whittle & 30&19 & 1.097  & 0&65 & 0&02\vspace{2mm} \\
& Maximum likelihood & 10&63 & 1.829  & 5&09  &  31&65 \\
Pacific  & De-biased Whittle & 11&82 & 1.886  & 6&00 & 0&09 \\
& Standard Whittle & 19&59 & 1.566  & 7&18 & 0&02\vspace{2mm} \\
& Maximum likelihood & 21&76 & 1.825  & \quad30&47  &  30&37 \\
Indian   & De-biased Whittle & 19&91 & 1.802  & 22&71 & 0&11 \\
& Standard Whittle & 39&99 & 1.545  & 31&19 & 0&01 \\
\end{tabular}
\end{table}

We estimate the Mat\'ern parameters for each time series using the de-biased and regular Whittle likelihood, as well as exact maximum likelihood. The latter of these methods can be performed over only positive or negative frequencies by first decomposing the time series into analytic and anti-analytic components using the discrete Hilbert transform, see \cite{marple1999computing}, and then fitting the corresponding signal to an adjusted Mat\'ern autocovariance that accounts for the effects of the Hilbert transform. The details for this procedure are provided in the online code. 

The parameter estimates from the three likelihoods are displayed in Table~\ref{DrifterTable}, along with the corresponding CPU times. We reparametrize the Mat\'ern to output three important oceanographic quantities: the damping timescale, the decay rate of the spectral slope, and the diffusivity (which is the rate of particle dispersion) given by $\kappa \equiv A^2/4c^{2\alpha}$ \cite[eq.(43)]{lilly2016fractional}. From Table~\ref{DrifterTable} it can be seen that the de-biased Whittle and maximum likelihoods yield similar values for the slope and damping timescale, however, regular Whittle likelihood yields parameters that differ by around 15\% for the slope, and over 100\% for the damping timescale. These are consistent with the significant biases reported in our simulation studies in Section~\ref{S:Simulations}. The diffusivity estimates vary across all estimation procedures, and this variability is likely due to the fact that diffusivity is a measure of the spectrum at frequency zero, hence estimation is performed over relatively few frequencies. The time-domain maximum likelihood is over two orders of magnitude slower to execute than the de-biased Whittle likelihood. When this difference is scaled up to fitting all time series in the Global Drifter Program database, then time-domain maximum likelihood becomes impractical for such large datasets (taking years rather than days on the machine used in this example). The de-biased Whittle likelihood, on the other hand, retains the speed of Whittle likelihood, whilst returning estimates that are close to maximum likelihood. This section therefore serves as a proof of concept of how the de-biased Whittle likelihood is a useful tool for efficiently estimating parameters from large datasets.

\section{Properties of the De-Biased Whittle Likelihood}\label{S:Prop}
In this section, we establish consistency and optimal convergence rates for de-biased Whittle estimates with Gaussian processes. 
We will assume that the process is Gaussian in our proofs, however, formally we only require that the {\em Fourier transform} of the process is Gaussian. This is in general a weaker requirement. Processes that are non-Gaussian in the time domain may in fact have Fourier transforms with approximately Gaussian distributions for sufficiently large sample size. This is a consequence of a central limit theorem \cite[p.94]{brillinger2001time}, which also provides formal conditions when the Gaussian assumption is asymptotically valid. \cite{serroukh2000wavelet} provide practical examples which satisfy such conditions.

To show that de-biased Whittle estimates converge at on optimal rate, the main challenge is that although our pseudo-likelihood accounts for the bias of the periodogram, there is still present the broadband correlation between different frequencies caused by the leakage associated with the Fej\'er kernel. This is what prevents the de-biased Whittle likelihood from being exactly equal to the time-domain maximum likelihood for Gaussian data. To establish optimal convergence rates, we bound the asymptotic behaviour of this correlation. The statement is provided in Theorem~\ref{Thm1}, with the proof provided in the Appendix. The proof is composed of several lemmas which, for example, place useful bounds on the expected periodogram, the variance of linear combinations of the periodogram at different frequencies, and also the first and second derivatives of the de-biased Whittle likelihood. Together these establish that the de-biased Whittle likelihood is a consistent estimator with estimates that converge in probability at an optimal rate of $n^{-1/2}$, under relatively weak assumptions. 
\begin{theorem}\label{Thm1}
Assume that  $\{X_t\}$ is an infinite sequence obtained from sampling a zero-mean continuous-time real-valued process $X(t;\theta)$, which satisfies the following assumptions:
\begin{enumerate}
\item The parameter set $\Theta \subset \R^p$ is compact with a non-null interior, and the true length-$p$ parameter vector $\theta$ lies in the interior of $\Theta$.
\item Assume that for all $\theta\in\Theta$ and $\omega\in[-\pi,\pi]$, the spectral density of the sequence $\{X_t\}$ is bounded below by $f(\omega;\theta)\geq f_{\min}>0$, and bounded above by $f(\omega;\theta)\leq f_{\max}$.
\item If $\theta \neq \tilde\theta$, then there is a space of non-zero measure such that for all $\omega$ in this space $f(\omega;\theta)\neq f(\omega;\tilde\theta)$.
\item Assume that $f(\omega;\theta)$ is continuous in $\theta$ and Riemann integrable in $\omega$.
\item Assume that the expected periodogram $\overline{f}_n(\omega;\theta)$, as defined in~\eqref{FejerKernel}, has two continuous derivatives in $\theta$ which are bounded above in magnitude uniformly for all $n$, where the first derivative in $\theta$ also has $\Theta(n)$ frequencies in $\Omega$ that are non-zero.
\end{enumerate}
Then the estimator
\[
\hat{\theta}=\arg \max_{\theta\in \Theta}\ell_D(\theta),
\]
for a sample $\{X_t\}_{t=1}^n$, where $\ell_D(\theta)$ is the de-biased Whittle likelihood of~\eqref{discrete_fourier_likelihood}, satisfies
\[\hat{\theta}= \theta +{\cal O}_P\left( n^{-1/2}\right).\]
\end{theorem}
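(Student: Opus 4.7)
The plan is to follow the standard two-step M-estimator program for $\hat\theta=\arg\max_{\theta\in\Theta}\ell_D(\theta)$. First I would establish consistency via a uniform law of large numbers combined with a unique-maximizer argument; write $\theta_0$ for the true parameter. Second, because $\theta_0$ lies in the interior of $\Theta$ by assumption 1, consistency allows a mean-value expansion of the score equation $\nabla\ell_D(\hat\theta)=0$ around $\theta_0$, giving
\[
\hat\theta-\theta_0=-\bigl[\nabla^2\ell_D(\theta^\ast)\bigr]^{-1}\nabla\ell_D(\theta_0)
\]
for some $\theta^\ast$ on the segment from $\hat\theta$ to $\theta_0$. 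The $n^{-1/2}$ rate then follows if I can show $\nabla\ell_D(\theta_0)=\mathcal{O}_P(n^{1/2})$ and that $-\nabla^2\ell_D(\theta^\ast)/n$ is uniformly bounded below by a positive definite matrix with probability tending to one.

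For consistency I would study $\tilde L_n(\theta)=\ell_D(\theta)/n$. Since $\E\{I(\omega)\}=\overline{f}_n(\omega;\theta_0)$ by construction of the de-biased likelihood, a direct computation gives
\[
\E\tilde L_n(\theta)=-\frac{1}{n}\sum_{\omega\in\Omega}\left\{\log\overline{f}_n(\omega;\theta)+\frac{\overline{f}_n(\omega;\theta_0)}{\overline{f}_n(\omega;\theta)}\right\}.
\]
A preparatory lemma bounding $|\overline{f}_n(\omega;\theta)-f(\omega;\theta)|$ in a suitable integrated sense, using assumption 2 and the fact that the Fej\'er kernel is a summability kernel with unit $L^1$ mass, shows that this Riemann-type sum converges to
\[
L(\theta)=-\frac{1}{2\pi}\int_{-\pi/\Delta}^{\pi/\Delta}\left\{\log f(\omega;\theta)+\frac{f(\omega;\theta_0)}{f(\omega;\theta)}\right\}d\omega,
\]
uniformly in $\theta$ thanks to compactness of $\Theta$ (assumption 1) together with continuity in $\theta$ and Riemann integrability in $\omega$ (assumption 4). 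The identifiability condition (assumption 3), combined with the elementary fact that $g\mapsto\log g+f_0/g$ is uniquely minimized at $g=f_0$, then ensures $L$ has a unique maximum at $\theta_0$. Uniform stochastic control of $\tilde L_n-\E\tilde L_n$ follows from the pointwise variance bound of the next paragraph together with the derivative bounds in assumption 5, after which consistency is delivered by the standard argmax-continuity theorem.

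The main obstacle, precisely as flagged in the excerpt, is the variance bound on the score. Writing
\[
\frac{\partial\ell_D(\theta_0)}{\partial\theta_i}=\sum_{\omega\in\Omega}\bigl(I(\omega)-\overline{f}_n(\omega;\theta_0)\bigr)\frac{\partial_i\overline{f}_n(\omega;\theta_0)}{\overline{f}_n(\omega;\theta_0)^2},
\]
the difficulty is that Fej\'er-kernel leakage makes $\cov\{I(\omega),I(\omega')\}$ nonzero between distinct Fourier frequencies, so the naive diagonal variance estimate is inadequate. My plan is to compute $\cov\{I(\omega),I(\omega')\}$ in closed form by applying Isserlis' theorem to the Gaussian periodogram, expressing it as the squared modulus of a bilinear form in DFT basis vectors against $C(\theta_0)$. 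This exhibits the covariance as a Fej\'er-type leakage kernel whose $L^1$ mass is uniformly bounded in $n$; summing against the uniformly bounded weights $\partial_i\overline{f}_n/\overline{f}_n^2$ (finite by assumptions 2 and 5) then collapses the double sum to the required $\mathcal{O}(n)$ bound via a Young-type convolution inequality, yielding $\nabla\ell_D(\theta_0)=\mathcal{O}_P(n^{1/2})$ by Chebyshev. The matching lower bound on $-\nabla^2\ell_D/n$ comes from its expectation converging to the Fisher-type integral $(2\pi)^{-1}\int(\partial_if)(\partial_jf)/f^2\,d\omega$, which is positive definite by assumptions 3 and 5 (the $\Theta(n)$ non-zero-derivative frequencies prevent degeneracy), while concentration uses the same second-order covariance calculation. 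Because the entire argument passes through $\overline{f}_n$, which is smooth at every finite $n$ irrespective of whether $f$ is continuous in $\omega$, mere Riemann integrability of $f$ suffices, matching the theorem's advertised weakening of classical spectral-density assumptions.
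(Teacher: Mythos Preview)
Your overall architecture matches the paper's: consistency via a contrast-function argument leading to the integral limit $L(\theta)$, then a mean-value expansion of the score with control of $\nabla\ell_D(\theta_0)$ and $\nabla^2\ell_D$. Two places where your sketch departs from the paper are worth noting.

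For the score variance you propose to treat $\cov\{I(\omega),I(\omega')\}$ as a Fej\'er-type leakage kernel with bounded $L^1$ mass and finish with a Young-type convolution inequality. The paper also starts from Isserlis to get $\cov\{I(\omega),I(\omega')\}=\bigl|n^{-1}G^\H(\omega)C(\theta_0)G(\omega')\bigr|^2$, but then collapses the double sum \emph{algebraically} using the exact DFT orthogonality $\sum_{\omega'\in\Omega}G(\omega')G^\H(\omega')=nI_n$, followed by the Toeplitz eigenvalue bound $\eta_{\max}\le f_{\max}$, to obtain $\var\{n^{-1}\sum_\omega a_n(\omega)I(\omega)\}\le a_{n,\max}^2 f_{\max}^2/n$ directly (their Lemma~8). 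Your route is not wrong in spirit, but the bounded-$L^1$-mass claim sits more naturally on $\cov\{J(\omega),J(\omega')\}$ than on its squared modulus, so a Young-type argument on $\cov\{I,I'\}$ would need extra care; the orthogonality trick sidesteps this entirely and is what makes the proof go through without any smoothness of $f$ in $\omega$.

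For the Hessian you assert that its expectation converges to the Fisher integral $(2\pi)^{-1}\int(\partial_if)(\partial_jf)/f^2\,d\omega$. The paper neither proves nor needs this: it works with the finite-$n$ quantity $-n^{-1}\sum_{\omega\in\Omega}\{\partial_i\overline f_n\,\partial_j\overline f_n\}/\overline f_n^2$ and shows it is $\Theta(1)$ straight from the $\Theta(n)$-nonzero-frequencies clause of Assumption~5. This matters because the stated assumptions impose no regularity on $\partial f/\partial\theta$ as a function of $\omega$, so passing to the integral limit for the Hessian is not obviously licensed; keep the nondegeneracy argument at the level of $\overline f_n$ rather than $f$.
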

Standard theory shows that standard Whittle estimates are consistent with optimal convergence rates for Gaussian processes if the spectrum (and its first and second partial derivatives in $\theta$) are continuous in $\omega$ and bounded from above and below (see \cite{dzhaparidze1983spectrum}), as well as being twice continuously differentiable in $\theta$. In contrast, we have not required that the spectrum nor its derivatives are continuous in $\omega$; such that Theorem~\ref{Thm1} will hold for discontinuous spectra, as long as the other assumptions are satisfied such as Riemann integrability. As detailed in the Appendix, this is possible because the expectation of the score is now zero after de-biasing (equation~\eqref{zerobias} in the Appendix material, which would {\em not} be the case for the standard Whittle likelihood), such that we only need to consider the variance of the score and Hessian. To control these variances we make repeated use of a bound on the variance of linear combinations of the periodogram (Lemma~\ref{lemma=boundOnVariance})---a result previously established in~\cite[Theorem 3.1]{giraitis2013asymptotic} under a different set of assumptions.

It can be easily shown that the assumptions in Theorem~\ref{Thm1} are weaker than standard Whittle assumptions, despite requiring statements on the behaviour of the expected periodogram $\overline{f}_n(\omega;\theta)$ in Assumption 5. This is because if the spectral density $f(\omega;\theta)$ (and its first and second partial derivatives in $\theta$) are continuous in both $\omega$ and $\theta$, then it can be shown by applying the Leibniz' integration rule to the first and second derivatives of~\eqref{FejerKernel} in $\theta$, that $f(\omega;\theta)$ twice continuously differentiable in $\theta$ implies that $\overline{f}_n(\omega;\theta)$ is twice continuously differentiable in $\theta$. To show this we make use of~\cite[Prop 3.1]{stein2011fourier} which states that the convolution of two integrable and periodic functions is itself continuous. This result can also be used to show that $\overline{f}_n(\omega; \theta)$ is always continuous in $\omega$, even if $f(\omega; \theta)$ is not, as from~\eqref{FejerKernel} we see that $\overline{f}_n(\omega; \theta)$ is the convolution of $f(\omega; \theta)$ and the Fej\'er kernel---two functions which are integrable and $2\pi$-periodic in $\omega$. Therefore, not only does $\overline{f}_n(\omega; \theta)$ remove bias from blurring and aliasing, and is computationally efficient to compute, but it also has desirable theoretical properties leading to consistency and optimal convergence rates of de-biased Whittle estimates under weaker assumptions.

\section*{Appendix}
In the Appendix we prove that de-biased Whittle estimates converge at an optimal rate (Theorem~\ref{Thm1}). To prove Theorem~\ref{Thm1} we will first show that the debiased Whittle estimator is consistent (Proposition~\ref{theorem=consistency}) in a series of steps using eight Lemmas. Consistency will be established by showing that properties of the debiased Whittle estimator converge to that of the standard Whittle estimator. Then having established consistency, we establish the convergence rates via Lemma~\ref{prop:rate}, where the differences between the debiased and standard Whittle estimators will become especially clear. This allows us to establish optimal convergence rates under weaker assumptions in Theorem~\ref{Thm1}, where we shall not require that the spectral density is continuous in frequency.

Without loss of generality, we shall assume that the sampling interval is set to $\Delta=1$ in this section. We need to make the following assumptions on the stochastic process $X(t;\theta)$ to achieve consistency and optimal convergence rates:

\begin{enumerate}
\item The parameter set $\Theta \subset \R^p$ is compact with a non-null interior, and the true length-$p$ parameter vector $\theta$ lies in the interior of $\Theta$.
\item Assume that for all $\theta\in\Theta$ and $\omega\in[-\pi,\pi]$, the spectral density of the sequence $\{X_t\}$ is bounded below by $f(\omega;\theta)\geq f_{\min}>0$, and bounded above by $f(\omega;\theta)\leq f_{\max}$.
\item If $\theta \neq \tilde\theta$, then there is a space of non-zero measure such that for all $\omega$ in this space $f(\omega;\theta)\neq f(\omega;\tilde\theta)$.
\item Assume that $f(\omega;\theta)$ is continuous in $\theta$ and Riemann integrable in $\omega$.
\item Assume that the expected periodogram $\overline{f}_n(\omega;\theta)$, as defined in~\eqref{FejerKernel}, has two continuous derivatives in $\theta$ which are bounded above in magnitude uniformly for all $n$, where the first derivative in $\theta$ also has $\Theta(n)$ frequencies in $\Omega$ that are non-zero.
\end{enumerate}

We start with the following lemma which bounds the behaviour of the expected periodogram.
\begin{lemma}
	\label{lemma=boundexpectedperiodogram}
	For all $\theta\in\Theta$ and $n\in\N$, the expected periodogram $\overline{f}_n(\omega;\theta)$ is bounded below (by a positive real number), and above, independently of $n$ and $\theta$.
\end{lemma}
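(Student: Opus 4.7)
The plan is to exploit the fact that, with $\Delta=1$, the expected periodogram is literally the convolution of $f(\cdot;\theta)$ against the Fejér kernel $\mathcal{F}_n$, and that $\mathcal{F}_n$ behaves like a probability density on the circle $[-\pi,\pi]$. Once that is in hand, the lemma is little more than the statement that convolving a bounded function with a probability measure preserves its pointwise upper and lower bounds.

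More concretely, I would first recall the two elementary properties of the Fejér kernel that I need: non-negativity, $\mathcal{F}_n(\omega)\geq 0$ for all $\omega$, and the normalisation
\begin{equation*}
\int_{-\pi}^{\pi}\mathcal{F}_n(\omega)\,d\omega = 1,
\end{equation*}
which is the standard computation $\tfrac{1}{2\pi n}\int_{-\pi}^{\pi}\sin^2(n\omega/2)/\sin^2(\omega/2)\,d\omega =1$ (e.g.\ by expanding in the Dirichlet--Fejér sum representation). Both properties hold uniformly in $n$.

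Next I would apply Assumption~2, which gives $f_{\min}\leq f(\nu;\theta)\leq f_{\max}$ uniformly in $\nu$ and $\theta$, inside the convolution~\eqref{FejerKernel}. Since $\mathcal{F}_n\geq 0$, these pointwise bounds can be pulled out of the integral:
\begin{equation*}
f_{\min}\int_{-\pi}^{\pi}\mathcal{F}_n(\omega-\nu)\,d\nu \;\leq\; \overline{f}_n(\omega;\theta) \;\leq\; f_{\max}\int_{-\pi}^{\pi}\mathcal{F}_n(\omega-\nu)\,d\nu.
\end{equation*}
Because $f$ and $\mathcal{F}_n$ are both $2\pi$-periodic in the frequency variable, the change of variable $u=\omega-\nu$ (combined with periodicity to keep the integration domain $[-\pi,\pi]$) together with the normalisation above shows that each flanking integral equals $1$. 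Hence
\begin{equation*}
0 \;<\; f_{\min} \;\leq\; \overline{f}_n(\omega;\theta) \;\leq\; f_{\max},
\end{equation*}
and these bounds are manifestly independent of both $n$ and $\theta$ since $f_{\min},f_{\max}$ are universal by Assumption~2 and the Fejér kernel normalisation does not depend on $n$.

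There is really no substantial obstacle here: the only small point worth double-checking is the periodic change-of-variable step, since the convolution in~\eqref{FejerKernel} is written over the fixed interval $[-\pi,\pi]$ rather than in a translation-invariant way. This is routine once the $2\pi$-periodicity of $f(\cdot;\theta)$ (inherited from being the spectral density of a unit-spaced sequence) and of $\mathcal{F}_n$ is noted. The lemma is therefore immediate from Assumption~2 and the standard properties of the Fejér kernel.
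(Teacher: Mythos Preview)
Your proposal is correct and follows essentially the same approach as the paper: both arguments use the non-negativity and unit mass of the Fej\'er kernel to sandwich the convolution $\overline{f}_n(\omega;\theta)$ between $f_{\min}$ and $f_{\max}$ from Assumption~2. You are simply a bit more explicit than the paper about the periodic change-of-variable needed to see that $\int_{-\pi}^{\pi}\mathcal{F}_n(\omega-\nu)\,d\nu=1$.
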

\begin{proof}
We start by noting that
\[
\overline{f}_n(\omega;\theta)=\int_{-\pi}^{\pi} 
f(\nu;\theta){\cal F}_{n}\left(\omega-\nu\right)\,d\nu, \quad{\cal F}_{n}\left(\omega\right)=\frac{1}{2\pi n}\frac{\sin^2\left(n\omega/2\right)}{\sin^2\left(\omega/2\right)},
\]
as given in~\eqref{FejerKernel} when $\Delta=1$. From Assumption 2 we have that $f(\nu;\theta)\geq f_{\min}>0$ and also that  $f(\nu;\theta)\leq f_{\max}\in\R$. It therefore follows that
\[
\overline{f}_n(\omega;\theta)\leq \int_{-\pi}^{\pi} 
{f}_{\max}{\cal F}_{n}\left(\omega-\nu\right)\,d\nu
={f}_{\max}\int_{-\pi}^{\pi} 
{\cal F}_{n}\left(\omega-\nu\right)\,d\nu={f}_{\max},
\]
such that the expected periodogram is upper bounded by $f_{\max}$. We also have that
\[
\overline{f}_n(\omega;\theta)\geq
\int_{-\pi}^{\pi} 
f_{\min}{\cal F}_{n}\left(\omega-\nu\right)\,d\nu={f}_{\min}\int_{-\pi}^{\pi} 
{\cal F}_{n}\left(\omega-\nu\right)\,d\nu={f}_{\min},
\]
such that the expected periodogram is lower bounded by $f_{\min}>0$.
 \end{proof}

Following the work of~\cite{taniguchi1979estimation} and \cite{guillaumin2017analysis}, we now introduce the following quantity
\begin{equation}
	D\sn\left(\gamma, g\right) = \frac{1}{n}\sum_{\omega\in\Omega}\left\{ \log \overline{f}_n(\omega;\gamma) + \frac{g(\omega)}{\overline{f}_n(\omega;\gamma)} \right\},
	\label{arthur1}
\end{equation}
for all $\theta\in\Theta$ and $n\in\N$.
We also define
\begin{equation}
	T\sn(g) = \arg\min_{\gamma\in\Theta}D\sn\left(\gamma, g\right).
	\label{arthur2}
\end{equation}
The minimum of $T\sn(g)$ for fixed $g$ is well defined since the set $\Theta$ is compact, and the function $D\sn\left(\gamma, g\right)$ is continuous in $\gamma$ (from Assumptions 1 and 5). However in cases where the minimum is not unique but exists at multiple parameter values, 
 $T\sn(g)$ will denote any of these values, chosen arbitrarily. We proceed with seven further lemmas that are required in proving Proposition~\ref{theorem=consistency} which establishes consistency, starting with Lemma~\ref{lemma=kappax} which we repeatedly use in the lemmas that follow.
\begin{lemma}\label{lemma=kappax}
The function $\kappa (x) = x-\log x$, defined on the set of positive real numbers, admits a global unique minimum for $x=1$ where it takes the value 1.
\end{lemma}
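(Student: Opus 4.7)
The plan is to verify the claim by direct one-variable calculus, since $\kappa$ is smooth on $(0,\infty)$ and the statement is purely analytic.

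First I would differentiate: $\kappa'(x) = 1 - 1/x$, which vanishes if and only if $x = 1$, giving a unique stationary point in the positive reals. To confirm this is a minimum (and in fact the global minimum), I would compute $\kappa''(x) = 1/x^2$, which is strictly positive on $(0,\infty)$, so $\kappa$ is strictly convex on its domain. Strict convexity together with the existence of a stationary point forces that stationary point to be the unique global minimizer.

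Equivalently, and perhaps more transparently, one can note that $\kappa'(x) < 0$ on $(0,1)$ and $\kappa'(x) > 0$ on $(1,\infty)$, so $\kappa$ is strictly decreasing on $(0,1]$ and strictly increasing on $[1,\infty)$, which directly yields the uniqueness of the minimizer at $x=1$. Finally, I would evaluate $\kappa(1) = 1 - \log 1 = 1$.

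There is no genuine obstacle here; the only thing to be careful about is simply recording that the monotonicity is strict (to ensure uniqueness, not merely minimality), which follows from the strict inequality $\kappa''(x) > 0$ throughout $(0,\infty)$.
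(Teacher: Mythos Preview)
Your proposal is correct and matches the paper's approach exactly: the paper simply remarks that the result follows by taking the derivative of $\kappa(x)$, which is precisely what you do (with the additional, harmless elaboration via the second derivative and monotonicity to secure uniqueness).
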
 
\begin{proof}
This can be easily shown by taking the derivative of $\kappa(x)$.
 \end{proof}
 
\begin{lemma}
	\label{lemma=uniquenessOfMin}
	For all integer $n$ the quantity $T\sn(g)$ as defined in~\eqref{arthur1} and \eqref{arthur2}, satisfies $T\sn(\overline{f}_n(\omega;\theta)) = \theta$.
\end{lemma}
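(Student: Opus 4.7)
The plan is to rewrite $D\sn(\gamma, \overline{f}_n(\cdot;\theta))$ in a form to which Lemma~\ref{lemma=kappax} applies termwise, thereby producing a lower bound that is achieved at $\gamma=\theta$. Substituting $g = \overline{f}_n(\cdot;\theta)$ into definition~\eqref{arthur1} gives
\[
D\sn(\gamma, \overline{f}_n(\cdot;\theta)) = \frac{1}{n}\sum_{\omega\in\Omega}\left\{\log\overline{f}_n(\omega;\gamma) + \frac{\overline{f}_n(\omega;\theta)}{\overline{f}_n(\omega;\gamma)}\right\}.
\]
I would then split the logarithm via $\log \overline{f}_n(\omega;\gamma) = \log \overline{f}_n(\omega;\theta) - \log[\overline{f}_n(\omega;\theta)/\overline{f}_n(\omega;\gamma)]$, yielding
\[
D\sn(\gamma, \overline{f}_n(\cdot;\theta)) = \frac{1}{n}\sum_{\omega\in\Omega}\log\overline{f}_n(\omega;\theta) + \frac{1}{n}\sum_{\omega\in\Omega}\kappa\!\left(\frac{\overline{f}_n(\omega;\theta)}{\overline{f}_n(\omega;\gamma)}\right),
\]
where $\kappa(x)=x-\log x$ is the function studied in Lemma~\ref{lemma=kappax}.

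The first sum is a constant independent of $\gamma$, so minimizing $D\sn$ in $\gamma$ is equivalent to minimizing the second sum. By Lemma~\ref{lemma=boundexpectedperiodogram}, both $\overline{f}_n(\omega;\theta)$ and $\overline{f}_n(\omega;\gamma)$ lie strictly between $f_{\min}>0$ and $f_{\max}$, so each ratio $x_\omega \equiv \overline{f}_n(\omega;\theta)/\overline{f}_n(\omega;\gamma)$ is a well-defined positive real number. Lemma~\ref{lemma=kappax} then gives $\kappa(x_\omega) \geq 1$ for every $\omega$, with equality iff $x_\omega=1$. Summing over $\omega\in\Omega$ yields the lower bound
\[
D\sn(\gamma,\overline{f}_n(\cdot;\theta)) \;\geq\; \frac{1}{n}\sum_{\omega\in\Omega}\log\overline{f}_n(\omega;\theta) + 1,
\]
valid uniformly in $\gamma\in\Theta$.

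Finally, setting $\gamma=\theta$ makes every ratio $x_\omega$ equal to $1$, so $\kappa(x_\omega)=1$ for each $\omega$, and the lower bound is attained. Thus $\theta$ is an element of $\arg\min_{\gamma\in\Theta} D\sn(\gamma,\overline{f}_n(\cdot;\theta))$, so by the convention introduced immediately after~\eqref{arthur2} for selecting $T\sn(g)$ when the minimizer is not unique, we may take $T\sn(\overline{f}_n(\cdot;\theta))=\theta$, which is the claim.

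I do not foresee a real obstacle here: the only thing to verify carefully is that the ratios inside $\kappa$ are strictly positive and finite so that Lemma~\ref{lemma=kappax} applies uniformly in $\omega$, and this is precisely the content of Lemma~\ref{lemma=boundexpectedperiodogram}. The argument is the finite-$n$, expected-periodogram analogue of the standard Kullback--Leibler identification argument used for Gaussian spectral likelihoods; identifiability of $\theta$ (as opposed to mere attainment of the minimum at $\theta$) is not asserted by this lemma and is handled separately via Assumption~3 in the subsequent consistency argument.
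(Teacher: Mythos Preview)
Your proposal is correct and follows essentially the same argument as the paper: both rewrite $D\sn(\gamma,\overline{f}_n(\cdot;\theta))$ by splitting off the $\gamma$-independent term $\log\overline{f}_n(\omega;\theta)$ and then apply Lemma~\ref{lemma=kappax} termwise to the ratio $\overline{f}_n(\omega;\theta)/\overline{f}_n(\omega;\gamma)$ to obtain the lower bound attained at $\gamma=\theta$. Your version is slightly more explicit in invoking Lemma~\ref{lemma=boundexpectedperiodogram} to justify positivity of the ratios and in noting that only attainment (not uniqueness) of the minimum is claimed, but these are refinements of presentation rather than differences in method.
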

\begin{proof}
We have that for all $\gamma\in\Theta$
\begin{eqnarray*}
	D\left\{\gamma, \overline{f}_n(\omega;\theta)\right\} &=& \frac{1}{n}\sum_{\omega\in\Omega}
		\left\{ 
			\log \overline{f}_n(\omega;\gamma) + 
			\frac{\overline{f}_n(\omega;\theta)}{\overline{f}_n(\omega;\gamma)} 
		\right\} 
	\\	&=&\frac{1}{n} \sum_{\omega\in\Omega}
		\left[
		\log \overline{f}_n(\omega;\theta) +
		 \overbrace{\frac{\overline{f}_n(\omega;\theta)}{\overline{f}_n(\omega;\gamma)} - 
		\log\left\{  \frac{\overline{f}_n(\omega;\theta)}{\overline{f}_n(\omega;\gamma)}\right\}
		}^{\geq 1}
		\right]\\
		&\geq& \frac{1}{n}\sum_{\omega\in\Omega}
		\left\{
		\log \overline{f}_n(\omega;\theta)
		 +1\right\},
	\end{eqnarray*}
	where from Lemma~\ref{lemma=kappax} we have an equality if and only if $\overline{f}_n(\omega;\theta) = \overline{f}_n(\omega;\gamma)$ for all $\omega\in\Omega$, which is clearly satisfied for $\gamma=\theta$.
 \end{proof}

This shows that for all $n$, the function $\gamma\rightarrow D\left\{\gamma, \overline{f}_n(\cdot;\theta)\right\}$ reaches a global minimum at the true parameter vector $\theta$, although we have not proven any uniqueness properties at this stage. Now because $\overline{f}_n(\cdot;\theta)$ is changing with $n$, we require the following five lemmas.

\begin{lemma}
\label{lemma:localbehaviourfn}
		Define $\omega'\in[-\pi,\pi]$ such that $f(\omega;\gamma)$ is continuous at $\omega'$. For all $\epsilon>0$, there exists $\delta>0$ and an integer $m$ such that for all $n\geq m$, if $\omega''$ is such that $|\omega''-\omega'|\leq\delta$
		\begin{equation*}
			\left|
				\f_n(\omega'';\gamma) - f(\omega';\gamma)
			\right|
			\leq \epsilon.
		\end{equation*}
	\end{lemma}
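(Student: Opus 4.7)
The plan is to exploit the fact that the Fej\'er kernel $\mathcal{F}_n$ is a non-negative approximate identity: it integrates to $1$ on $[-\pi,\pi]$, and for any fixed $\eta>0$ the mass $\int_{|u|\geq\eta}\mathcal{F}_n(u)\,du$ tends to $0$ as $n\to\infty$. Combined with continuity of $f(\cdot;\gamma)$ at $\omega'$, this should force the convolution $\bar f_n(\omega'';\gamma)$ to be close to $f(\omega';\gamma)$ whenever $\omega''$ is close enough to $\omega'$ and $n$ is large enough.

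Concretely, the first step is to write, using $\int_{-\pi}^{\pi}\mathcal{F}_n(\omega''-\nu)\,d\nu=1$,
\[
\bar f_n(\omega'';\gamma)-f(\omega';\gamma)
=\int_{-\pi}^{\pi}\bigl[f(\nu;\gamma)-f(\omega';\gamma)\bigr]\,\mathcal{F}_n(\omega''-\nu)\,d\nu.
\]
Continuity of $f(\cdot;\gamma)$ at $\omega'$ supplies an $\eta>0$ such that $|f(\nu;\gamma)-f(\omega';\gamma)|\leq \epsilon/2$ whenever $|\nu-\omega'|\leq 2\eta$. Then I would choose $\delta=\eta$, so that $|\omega''-\omega'|\leq \delta$ together with $|\nu-\omega''|\leq \eta$ forces $|\nu-\omega'|\leq 2\eta$.

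The second step is to split the integral into the near region $\{\nu:|\nu-\omega''|\leq \eta\}$ and the far region $\{\nu:|\nu-\omega''|>\eta\}$ (understood modulo $2\pi$, since $\mathcal{F}_n$ and $f$ are $2\pi$-periodic). On the near region the integrand is bounded by $\epsilon/2$ in absolute value, so its contribution is at most $(\epsilon/2)\int \mathcal{F}_n\leq \epsilon/2$. On the far region, Assumption~2 gives $|f(\nu;\gamma)-f(\omega';\gamma)|\leq 2f_{\max}$, and the explicit form $\mathcal{F}_n(u)=\frac{1}{2\pi n}\frac{\sin^2(nu/2)}{\sin^2(u/2)}$ yields the uniform bound $\mathcal{F}_n(u)\leq \frac{1}{2\pi n\sin^2(\eta/2)}$ for $|u|\geq\eta$, so the far-region contribution is at most $\frac{2 f_{\max}}{n\sin^2(\eta/2)}$. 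Choosing $m$ large enough that this is $\leq \epsilon/2$ for all $n\geq m$ yields the claim.

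The main (and really only) subtlety is that continuity of $f(\cdot;\gamma)$ is a single-point hypothesis at $\omega'$, not uniform in a neighborhood, so the bound on the integrand in the near region must be obtained by feeding in $\nu$-values that sit within $2\eta$ of $\omega'$ rather than of $\omega''$; choosing $\delta\leq\eta$ and then enlarging the continuity radius to $2\eta$ takes care of this. Everything else is a routine approximate-identity argument that does not require any assumption on $f$ beyond boundedness (Assumption~2) and pointwise continuity at $\omega'$.
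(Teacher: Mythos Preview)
Your argument is correct and is essentially the same approximate-identity approach as the paper's: both write $\bar f_n(\omega'';\gamma)-f(\omega';\gamma)$ as an integral against the Fej\'er kernel, split into a near region (controlled by continuity at $\omega'$ via the triangle-inequality trick with radii $\eta$ and $2\eta$, which the paper does with $\delta/2$ and $\delta$) and a far region (controlled by the vanishing Fej\'er tail and the bound $f\leq f_{\max}$). The only cosmetic difference is that you bound the Fej\'er tail explicitly via $\mathcal{F}_n(u)\leq (2\pi n\sin^2(\eta/2))^{-1}$, whereas the paper cites the standard tail estimate from Brockwell and Davis.
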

	\begin{proof}
	By Assumption 4, $f(\omega;\gamma)$ is Riemann integrable in $\omega$, and therefore is continuous almost everywhere. It follows that there then exists an $\omega'\in[-\pi,\pi]$ such that $f(\omega;\gamma)$ is continuous at $\omega'$. Now let $\epsilon>0$.
	By continuity of $f(\omega;\gamma)$ at $\omega'$, there exists $\delta>0$ such that $|f(\omega'';\gamma)-f(\omega';\gamma)|\leq \epsilon/2$ for all $\omega''$ such that $|\omega''-\omega'|\leq\delta$.
		According to~\citep[p.71]{brockwell2009time} there exists an integer $m$ such that, for all $n\geq m$
		\begin{equation*}
			\int_{\frac{\delta}{2}\leq|\lambda|\leq\pi}{|F_n(\lambda)|}d\lambda\leq \frac{\epsilon}{2\|f\|_\infty}.
		\end{equation*}
		We then have for $\omega''$ such that $|\omega''-\omega'|\leq \delta/2$
		\begin{align*}
			\left|\f_n(\omega'';\gamma) - f(\omega';\gamma)\right| &= \left|\int_{-\pi}^\pi{\left\{f(\omega''-\lambda;\gamma)-f(\omega';\gamma)\right\} F_n(\lambda)}d\lambda\right|\\
			&\leq \int_{-\frac{\delta}{2}}^{\frac{\delta}{2}}{|f(\omega''-\lambda;\gamma)-f(\omega';\gamma)||F_n(\lambda)| }d\lambda\\
			&+  \int_{\frac{\delta}{2}\leq|\lambda|\leq\pi}{|f(\omega''-\lambda;\gamma)-f(\omega';\gamma)||F_n(\lambda)|}d\lambda.
		\end{align*}
		Observing that for $|\lambda|\leq\delta/2$ (in the first integral of the above equation) and given our choice of $\omega''$, we have by the triangle inequality $|(\omega''-\lambda)-\omega'|\leq |\omega''-\omega'|+|\lambda|\leq \delta$, such that $|f(\omega''-\lambda;\gamma)-f(\omega';\gamma)|\leq\epsilon/2$ and thus we obtain
		\[
			\left|\f_n(\omega'';\gamma) - f(\omega';\gamma)\right|\leq \frac{\epsilon}{2} \int_{-\frac{\delta}{2}}^{\frac{\delta}{2}}{|F_n(\lambda)| }d\lambda + 
			\|f\|_\infty\int_{\frac{\delta}{2}\leq|\lambda|\leq\pi}{|F_n(\lambda)|}d\lambda
			\leq \frac{\epsilon}{2}+\frac{\epsilon}{2} = \epsilon.
		\]
		This concludes the proof.
	 \end{proof}

\begin{lemma}
		\label{lemma:convergenceToIntegral}
		Recalling the definition of $D\sn\left\{\gamma, \f_n(\cdot;\btheta)\right\}$ in~\eqref{arthur1}, we have that
		\begin{equation}
			\label{eq:integral13060611}
			D\sn\left\{\gamma, \f_n(\cdot;\btheta)\right\} \rightarrow 
			\frac{1}{2\pi} \int_{-\pi}^{\pi}\left\{\log f(\omega;\gamma) + \frac{f(\omega;\btheta)}{f(\omega;\gamma)}\right\}d\omega.
		\end{equation}
	\end{lemma}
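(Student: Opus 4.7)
The goal is to show that the Riemann-sum-like quantity $D^{(n)}\{\gamma,\overline{f}_n(\cdot;\theta)\}$ converges to the integral on the right-hand side. The natural decomposition is
\[
D^{(n)}\{\gamma,\overline{f}_n(\cdot;\theta)\}-\frac{1}{2\pi}\int_{-\pi}^{\pi}h(\omega)\,d\omega
=\underbrace{\frac{1}{n}\sum_{\omega\in\Omega}\bigl[h_n(\omega)-h(\omega)\bigr]}_{(\mathrm{I})}+\underbrace{\Bigl[\frac{1}{n}\sum_{\omega\in\Omega}h(\omega)-\frac{1}{2\pi}\int_{-\pi}^{\pi}h(\omega)\,d\omega\Bigr]}_{(\mathrm{II})},
\]
where $h_n(\omega)=\log\overline{f}_n(\omega;\gamma)+\overline{f}_n(\omega;\theta)/\overline{f}_n(\omega;\gamma)$ and $h(\omega)=\log f(\omega;\gamma)+f(\omega;\theta)/f(\omega;\gamma)$. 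The plan is to control these two terms separately, using Riemann-sum convergence for (II) and a uniform-domination plus pointwise-convergence argument for (I).

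For (II), I would first note that $h$ is bounded and Riemann integrable on $[-\pi,\pi]$: by Assumption 2 we have $f_{\min}\le f(\omega;\cdot)\le f_{\max}$, and by Assumption 4 both $f(\cdot;\gamma)$ and $f(\cdot;\theta)$ are Riemann integrable, hence continuous outside a set of Lebesgue measure zero. The functions $x\mapsto\log x$ and $(x,y)\mapsto y/x$ are continuous on the compact range in question, so by Lebesgue's criterion $h$ is itself Riemann integrable. Since $\Omega$ is an equispaced grid of step $2\pi/n$, the sum $\frac{1}{n}\sum_{\omega\in\Omega}h(\omega)$ is a Riemann sum that converges to $\frac{1}{2\pi}\int_{-\pi}^{\pi}h\,d\omega$, so (II)$\to 0$.

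For (I), Lemma~\ref{lemma=boundexpectedperiodogram} gives $f_{\min}\le\overline{f}_n(\omega;\cdot)\le f_{\max}$ uniformly in $n$, which yields a finite constant $M$ such that $|h_n(\omega)-h(\omega)|\le 2M$ for all $\omega,n$. The key step is then a uniform-on-a-large-set argument. Fix $\epsilon>0$ and let $A\subset[-\pi,\pi]$ be the set of discontinuities of $f(\cdot;\gamma)$ or $f(\cdot;\theta)$; by Assumption 4 and Lebesgue's criterion, $|A|=0$, so there is an open set $U\supset A$ with $|U|<\epsilon$. Its complement $K=[-\pi,\pi]\setminus U$ is compact, and $f(\cdot;\gamma),f(\cdot;\theta)$ are continuous (hence uniformly continuous) on $K$. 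Applying Lemma~\ref{lemma:localbehaviourfn} at each point of $K$ and extracting a finite subcover yields a single integer $m$ such that $\sup_{\omega\in K}|\overline{f}_n(\omega;\gamma)-f(\omega;\gamma)|\le\epsilon$ and similarly for $\theta$, for all $n\ge m$; continuity of $\log$ and of division (with denominators bounded below by $f_{\min}$) then gives $\sup_{\omega\in K}|h_n(\omega)-h(\omega)|\to 0$. Splitting the sum in (I) accordingly,
\[
\Bigl|\tfrac{1}{n}\textstyle\sum_{\omega\in\Omega}[h_n-h]\Bigr|
\le 2M\cdot\tfrac{\#(\Omega\cap U)}{n}+\sup_{\omega\in K}|h_n(\omega)-h(\omega)|,
\]
and since $\#(\Omega\cap U)/n\to|U|/(2\pi)<\epsilon/(2\pi)$, the right-hand side is eventually $O(\epsilon)$. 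Letting $\epsilon\downarrow 0$ shows (I)$\to 0$.

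The main obstacle is that both the grid $\Omega$ and the integrand $h_n$ depend on $n$ simultaneously; a naive pointwise limit is not enough because $\overline{f}_n$ need not converge at discontinuity points of $f$. The trick is precisely the ``delete a small neighborhood of the discontinuities'' step above, which exploits that the discontinuity set has measure zero (Riemann integrability) so that it contains only an asymptotically negligible fraction of the Fourier grid; this is where the proof differs substantively from the classical setting in which $f$ is assumed continuous.
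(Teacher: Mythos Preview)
Your decomposition into (I) and (II) is natural and your treatment of (II) is correct. The gap is in (I), specifically the step ``since $\#(\Omega\cap U)/n\to|U|/(2\pi)$''. This equidistribution statement holds when $U$ is Jordan measurable (i.e.\ its boundary has Lebesgue measure zero), but an arbitrary open set need not be Jordan measurable, and you cannot in general choose such a $U$. Indeed, the discontinuity set $A$ of a Riemann-integrable function can be dense: for instance one may arrange $A$ to contain the countable set $\{2\pi k/m:\ k\in\mathbb{Z},\ m\in\mathbb{N}\}\cap[-\pi,\pi]$ of \emph{all} Fourier frequencies for every $n$. Then any open $U\supset A$ satisfies $\Omega\subset U$ for every $n$, so $\#(\Omega\cap U)/n=1$ regardless of how small $|U|$ is, and your bound on (I) gives only the useless estimate $2M$. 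The compactness/finite-subcover part of your argument on $K$ is fine; it is the control of the complementary grid points that breaks down.

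The paper avoids this by \emph{not} introducing the intermediate term $h(\omega)$ evaluated at grid points. Instead it rewrites the sum $D^{(n)}$ directly as a Lebesgue integral of a step function,
\[
D^{(n)}\{\gamma,\overline f_n(\cdot;\theta)\}=\int_{-\pi}^{\pi}g_n(\omega)\,d\omega,\qquad g_n(\omega)=h_n\bigl(k_n(\omega)\bigr),
\]
where $k_n(\omega)$ denotes the Fourier frequency immediately below $\omega$. Using Lemma~\ref{lemma:localbehaviourfn} one shows $g_n(\omega)\to h(\omega)$ at every continuity point $\omega$ of $f(\cdot;\gamma)$ and $f(\cdot;\theta)$, hence almost everywhere; the uniform bounds from Lemma~\ref{lemma=boundexpectedperiodogram} give a dominating constant, and the dominated convergence theorem yields the result in one stroke. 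This bypasses any need to count how many grid points fall in a small set. Your argument can be repaired by handling (I) the same way: write $\frac{1}{n}\sum_{\omega\in\Omega}[h_n(\omega)-h(\omega)]$ as $\frac{1}{2\pi}\int[g_n(\omega)-h(k_n(\omega))]\,d\omega$, observe that both $g_n(\omega)$ and $h(k_n(\omega))$ converge to $h(\omega)$ at each continuity point of $h$ (for the latter, because $k_n(\omega)\to\omega$ and $h$ is continuous there), and apply dominated convergence. But at that point the split into (I) and (II) is no longer buying you anything.
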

	\begin{proof}
		We have
		\[
		D\sn\left\{\gamma, \f_n(\cdot;\theta)\right\}  = \frac{1}{n}\sum_{\omega\in\Omega_n}\left\{\log \f_n(\omega;\gamma) + \frac{\f_n(\omega;\theta)}{\f_n(\omega;\gamma)}\right\}.
		\]
		Define 
		\[
			g_n(\omega) = \log \f_n\left\{k_n(\omega);\gamma\right\} + \frac{\f_n\left\{k_n(\omega);\theta\right\}}{\f_n\left\{k_n(\omega);\gamma\right\}},
		\]
		where $k_n(\omega) = \frac{\left\lfloor n \omega \right\rfloor}{n}$, i.e. $k_n(\omega)$ corresponds to the closest smaller Fourier frequency to $\omega$.
		Then,
		\[
			D\sn\left\{\gamma, \f_n(\omega;\theta)\right\} = \int_{-\pi}^\pi{g_n(\omega)d\omega}.
		\]
		We shall now use the bounded convergence theorem, for which we need to show that $g_n(\omega)$ converges almost everywhere. We recall $f(\omega;\theta)$ is continuous almost everywhere. Now take $\omega'\in[-\pi,\pi]$ such that $f(\omega;\theta)$ is continuous at $\omega'$. Let $\epsilon>0$. 
		Using Lemma~\ref{lemma:localbehaviourfn}, there exists $\delta$ and an integer $m$ such that, for all $\omega''$ satisfying $|\omega''-\omega'|\leq \delta$, and for all $n\geq m$, $|\f_n(\omega'')-f(\omega')|\leq \epsilon$.
		
		Additionally, $k_n(\omega')$ converges to $\omega'$ when $n$ goes to infinity, such that there exists an integer $p$ such that $|k_n(\omega')-\omega'|\leq\delta$ for $n\geq p$. Therefore, eventually in $n$, we have $|\f_n\left\{k_n(\omega');\theta\right\}-f(\omega';\theta)|\leq \epsilon$. Thus, $\f_n\left\{k_n(\omega');\theta\right\}$ converges to $f(\omega';\theta)$. As we have shown this for almost every $\omega'\in[-\pi,\pi]$, we have proved the point-wise convergence of $\f_n\left\{k_n(\omega);\theta\right\}$ to $f(\omega;\theta)$ almost everywhere with respect to $\omega$ on $[-\pi,\pi]$. The same reasoning shows the point-wise convergence of $\f_n\left\{k_n(\omega);\gamma\right\}$ to $f(\omega;\gamma)$ and that of $\log \f_n\left\{k_n(\omega);\gamma\right\}$ to $\log f(\omega;\gamma)$ almost everywhere with respect to $\omega$ on $[-\pi,\pi]$, as $f(\omega;\gamma)$ and $\f_n(\omega;\gamma)$ are bounded below. As the finite intersection of Lebesgue sets each having measure $2\pi$ is a Lebesgue set with measure $2\pi$, $g_n(\omega)$ converges point-wise almost everywhere to the integrand of the right-hand-side of~\eqref{eq:integral13060611}. Moreover, $g_n(\omega)$ is clearly upper bounded in absolute value by an integrable function according to Lemma~\ref{lemma=boundexpectedperiodogram}, such that we can apply the dominated convergence theorem and conclude that the sum on the left-hand-side of~\eqref{eq:integral13060611} converges to the integral on the right-hand-side of~\eqref{eq:integral13060611} as $n$ goes to infinity.
	 \end{proof}

\begin{lemma}
\label{lemma:gamma_nTotheta}
	If $\gamma\in\Theta$ and if $\{\gamma_n\}_{n\in\N}\in\Theta^{\N}$ is a sequence of parameter vectors converging to $\gamma$, then it follows that
	\[
			D\sn\left\{\gamma_n, \f_n(\omega;\theta)\right\} \rightarrow 
			\frac{1}{2\pi} \int_{-\pi}^{\pi}\left\{\log f(\omega;\gamma) + \frac{f(\omega;\theta)}{f(\omega;\gamma)}\right\}d\omega.
		\]
		\end{lemma}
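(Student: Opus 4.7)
The plan is to reduce this to Lemma~\ref{lemma:convergenceToIntegral} by showing that the difference
\[
D\sn\left\{\gamma_n, \f_n(\cdot;\theta)\right\} - D\sn\left\{\gamma, \f_n(\cdot;\theta)\right\}
\]
vanishes as $n\to\infty$, so that a triangle inequality against the already-established limit for the fixed-$\gamma$ version gives the claim. Concretely, I would write
\[
D\sn\left\{\gamma_n, \f_n(\cdot;\theta)\right\} = \left[D\sn\left\{\gamma_n, \f_n(\cdot;\theta)\right\} - D\sn\left\{\gamma, \f_n(\cdot;\theta)\right\}\right] + D\sn\left\{\gamma, \f_n(\cdot;\theta)\right\},
\]
and note that Lemma~\ref{lemma:convergenceToIntegral} handles the second bracket.

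The first step is then to control the bracket $[\,\cdot\,]$ via an equicontinuity/Lipschitz argument in $\gamma$ that is \emph{uniform in $n$}. For this, I would differentiate
\[
\frac{\partial}{\partial \gamma_i} D\sn\left\{\gamma, \f_n(\cdot;\theta)\right\} = \frac{1}{n}\sum_{\omega\in\Omega}\left\{\frac{1}{\f_n(\omega;\gamma)} - \frac{\f_n(\omega;\theta)}{\f_n(\omega;\gamma)^2}\right\}\frac{\partial \f_n(\omega;\gamma)}{\partial \gamma_i}
\]
(legitimate by Assumption~5, which gives two continuous derivatives in $\gamma$), and then bound each of the three factors: $\f_n(\omega;\gamma) \geq f_{\min}$ and $\f_n(\omega;\theta)\leq f_{\max}$ by Lemma~\ref{lemma=boundexpectedperiodogram}, and $|\partial_{\gamma_i}\f_n(\omega;\gamma)|$ uniformly by Assumption~5. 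This yields a constant $L$, independent of $n$, $\omega$, and $\gamma\in\Theta$, such that $\gamma\mapsto D\sn\left\{\gamma, \f_n(\cdot;\theta)\right\}$ is $L$-Lipschitz on $\Theta$. A coordinate-wise mean value theorem then gives
\[
\left| D\sn\left\{\gamma_n, \f_n(\cdot;\theta)\right\} - D\sn\left\{\gamma, \f_n(\cdot;\theta)\right\} \right| \leq L \|\gamma_n - \gamma\|,
\]
which tends to zero by hypothesis on $\gamma_n$.

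Combining this with Lemma~\ref{lemma:convergenceToIntegral} applied to the fixed vector $\gamma$ yields
\[
D\sn\left\{\gamma_n, \f_n(\cdot;\theta)\right\} \rightarrow \frac{1}{2\pi}\int_{-\pi}^{\pi}\left\{\log f(\omega;\gamma) + \frac{f(\omega;\theta)}{f(\omega;\gamma)}\right\}d\omega,
\]
as required. The main technical subtlety I expect to encounter is verifying that the Lipschitz constant really is independent of $n$: the combination of Lemma~\ref{lemma=boundexpectedperiodogram} (uniform two-sided bounds on $\f_n$) with the uniform-in-$n$ bound on $\partial_{\gamma_i}\f_n$ from Assumption~5 is precisely what makes this go through, and it is important that neither bound deteriorates in $n$. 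Once that is in hand, the rest is bookkeeping with the triangle inequality, and no additional work on integrability or convergence is needed beyond what Lemma~\ref{lemma:convergenceToIntegral} already provides.
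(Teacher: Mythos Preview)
Your proposal is correct and follows essentially the same approach as the paper: both reduce to Lemma~\ref{lemma:convergenceToIntegral} via the triangle inequality and then show that $D\sn\{\gamma_n,\f_n(\cdot;\theta)\}-D\sn\{\gamma,\f_n(\cdot;\theta)\}\to 0$ using the uniform-in-$n$ bound on $\partial_{\gamma_i}\f_n$ from Assumption~5 together with the lower bound on $\f_n$ from Lemma~\ref{lemma=boundexpectedperiodogram}. Your version simply makes the Lipschitz step explicit where the paper is terse.
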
	
\begin{proof}
By the triangle inequality and having proved Lemma~\ref{lemma:convergenceToIntegral} we only need to prove that
\[
		D\sn\left\{\gamma_n, \f_n(\cdot;\theta)\right\}- D\sn\left\{\gamma, \f_n(\cdot;\theta)\right\},
		\]
		converges to zero.
This quantity can be written as
\[
	\frac{1}{n}\sum_{\omega\in\Omega_n}\left[\log \f_n(\omega;\gamma_n)- \log\f_n(\omega;\gamma) + \f_n(\omega;\theta)\left\{ \frac{1}{\f_n(\omega;\gamma_n)}-\frac{1}{\f_n(\omega;\gamma)} \right\}\right],
\]
which converges to zero because of the upper bound on the absolute derivative of $\partial \f_n (\omega;\gamma)/ \partial\gamma$ and lower bound for $\f_n(\omega;\gamma)$.
 \end{proof}

\begin{lemma}
\label{lemma=minimalValues}
If $\left\{\gamma_n\right\}_{n\in\N}\in\Theta^\N$ is a sequence of parameter vectors such that $D\left(\gamma_n, \overline{f}_n(\cdot;\theta)\right)-D\left(\theta, \overline{f}_n(\cdot;\theta)\right)$ converges to zero when $n$ goes to infinity, then $\gamma_n$ converges to $\theta$.
\end{lemma}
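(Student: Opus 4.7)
The plan is to argue by contradiction, using the compactness of $\Theta$ together with the limit identity already established in Lemma~\ref{lemma:gamma_nTotheta}. Suppose $\gamma_n$ does not converge to $\theta$. Then there exist $\epsilon>0$ and a subsequence $\{\gamma_{n_k}\}$ with $\|\gamma_{n_k}-\theta\|\geq\epsilon$ for every $k$. Since $\Theta\subset\R^p$ is compact (Assumption~1), we may pass to a further subsequence---still denoted $\{\gamma_{n_k}\}$---converging to some $\gamma^{\star}\in\Theta$ that necessarily satisfies $\gamma^{\star}\neq\theta$.

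Next I would compute the limits of the two relevant terms along this subsequence. The proof of Lemma~\ref{lemma:gamma_nTotheta} relies only on the bounds of Lemma~\ref{lemma=boundexpectedperiodogram} and the uniform derivative bound of Assumption~5, both of which pass without change to any subsequence of indices $n_k$; applying it with $\gamma_{n_k}\to\gamma^{\star}$ gives
\[
D^{(n_k)}\bigl(\gamma_{n_k},\f_{n_k}(\cdot;\theta)\bigr)\longrightarrow\frac{1}{2\pi}\int_{-\pi}^{\pi}\left\{\log f(\omega;\gamma^{\star})+\frac{f(\omega;\theta)}{f(\omega;\gamma^{\star})}\right\}d\omega,
\]
while Lemma~\ref{lemma:convergenceToIntegral}, applied along the same subsequence with $\gamma=\theta$, yields
\[
D^{(n_k)}\bigl(\theta,\f_{n_k}(\cdot;\theta)\bigr)\longrightarrow\frac{1}{2\pi}\int_{-\pi}^{\pi}\bigl\{\log f(\omega;\theta)+1\bigr\}d\omega.
\]
The hypothesis that $D^{(n)}(\gamma_n,\f_n(\cdot;\theta))-D^{(n)}(\theta,\f_n(\cdot;\theta))\to 0$ (which passes to the subsequence) then forces
\[
\frac{1}{2\pi}\int_{-\pi}^{\pi}\left\{\frac{f(\omega;\theta)}{f(\omega;\gamma^{\star})}-\log\frac{f(\omega;\theta)}{f(\omega;\gamma^{\star})}-1\right\}d\omega=0.
\]

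To finish, I would invoke Lemma~\ref{lemma=kappax}. Setting $x(\omega)=f(\omega;\theta)/f(\omega;\gamma^{\star})$, the integrand equals $\kappa(x(\omega))-1$, which is pointwise nonnegative on $[-\pi,\pi]$ and vanishes exactly when $x(\omega)=1$; since $f(\omega;\gamma^{\star})\geq f_{\min}>0$ by Assumption~2, the integrand is also integrable. A nonnegative integrable function whose integral is zero must vanish almost everywhere, so $f(\omega;\gamma^{\star})=f(\omega;\theta)$ for almost every $\omega\in[-\pi,\pi]$. The identifiability Assumption~3 then forces $\gamma^{\star}=\theta$, contradicting $\gamma^{\star}\neq\theta$.

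The principal obstacle is essentially bookkeeping: one must verify that Lemma~\ref{lemma:gamma_nTotheta}, stated for a full sequence, transfers to a subsequence. Because its proof invokes only the uniform bounds on $\f_n$ and $\partial\f_n/\partial\gamma$ together with convergence of the parameter vectors, the transfer is routine. The genuinely substantive step is the use of the strict convexity of $\kappa$ (Lemma~\ref{lemma=kappax}): it promotes an $L^1$ identity to pointwise-a.e. equality of the two spectra, after which Assumption~3 rules out $\gamma^{\star}\neq\theta$ and closes the contradiction.
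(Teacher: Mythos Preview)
Your proposal is correct and follows essentially the same route as the paper: argue by contradiction, extract a convergent subsequence via compactness of $\Theta$, apply Lemma~\ref{lemma:gamma_nTotheta} (and Lemma~\ref{lemma:convergenceToIntegral}) along the subsequence to obtain the two integral limits, and then use Lemma~\ref{lemma=kappax} together with Assumption~3 to derive the contradiction. Your write-up is in fact slightly more careful than the paper's in two places---you make explicit the preliminary step of first extracting a subsequence bounded away from $\theta$ before invoking compactness, and you spell out why the vanishing of the nonnegative integral forces a.e.\ equality of the spectra---but the underlying argument is the same.
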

\begin{proof}
Let $(\gamma_n)_{n\in\N}$ be a sequence of parameter vectors such that
\begin{equation}
	\label{eq:convergenceToMinimum}
	D\sn\left\{\gamma_n, \overline{f}_n(\cdot;\btheta)\right\}-D\sn\left\{\btheta, \overline{f}_n(\cdot;\btheta)\right\} \rightarrow 0.
\end{equation}
We assume, with the intent to reach a contradiction, that the sequence $(\gamma_n)$ does not converge to $\btheta$. By compactness of $\Theta$, there exists an increasing function from the set of positive integers to the set of positive integers, denoted $\phi$, and $\gamma\in\Theta$ distinct from $\theta$, such that the sequence $\gamma_{\phi(n)}$ converges to $\gamma$ as $n$ goes to infinity.

Using Lemma~\ref{lemma:gamma_nTotheta}, we have that 
\[
D^{\phi(n)}\left\{\gamma_{\phi(n)}, \overline{f}_{\phi(n)}(\cdot;\theta)\right\} \rightarrow
\frac{1}{2\pi} \int_{-\pi}^{\pi}\left\{\log f(\omega;\gamma) + \frac{f(\omega;\btheta)}{f(\omega;\gamma)}\right\}d\omega.
\]
Similarly,
\[
D^{\phi(n)}\left\{\theta, \overline{f}_{\phi(n)}(\cdot;\theta)\right\} \rightarrow
\frac{1}{2\pi} \int_{-\pi}^{\pi}\left\{\log f(\omega;\theta) + 1\right\}d\omega.
\]

As $f(\omega;\gamma)$ and $f(\omega;\theta)$ are, by Assumption 3, distinct on a non-zero Lebesgue subset of $[-\pi, \pi]$, and using the properties of the function $x\rightarrow x - \log x$ from Lemma~\ref{lemma=kappax}, we have that the expression $(1/2\pi) \int_{-\pi}^{\pi}\left\{\log f(\omega;\gamma) + f(\omega;\btheta)/f(\omega;\gamma)\right\}d\omega$ is strictly larger than $(1/2\pi) \int_{-\pi}^{\pi}\left\{\log f(\omega;\theta) + 1\right\}d\omega$, and this is a contradiction with~\eqref{eq:convergenceToMinimum}. Therefore the sequence $\gamma_n$ must converge to $\theta$, and this concludes the proof.
 \end{proof}

We now show that the functions $D\left\{\gamma,\overline{f}_n(\omega;\theta)\right\}$ and $D\left\{\gamma, I(\omega)\right\}$, defined on $\Theta$, are asymptotically equivalent. For this, we first need the following lemma where we bound the asymptotic variance of linear combinations of the periodogram, a result previously established in \cite[Theorem 3.1]{giraitis2013asymptotic} and \cite{guillaumin2017analysis} under different sets of assumptions.

\begin{lemma}\label{lemma=boundOnVariance}
Assume that  $\{X_t\}$ is an infinite sequence obtained from sampling a zero-mean continuous-time real-valued process $X(t;\theta)$ with a spectral density 
$f(\omega;\theta)$ that is bounded above by the finite value $f_{\max}$. Additionally, assume that the deterministic function $a_n(\omega)$ has a magnitude that is bounded above by $a_{n,\max}$ for all $\omega\in\Omega$. Then linear combinations of values of the periodogram, $I(\omega)$, at different frequencies, have a variance that is upper bounded by
\[
\var\left\{\frac{1}{n}\sum_{\omega\in\Omega} a_n(\omega) I(\omega)\right\} \le \frac{a^2_{n,\max}f^2_{\max}}{n}.
\]
\end{lemma}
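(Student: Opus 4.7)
The plan is to realise the sum $n^{-1}\sum_\omega a_n(\omega)I(\omega)$ as a quadratic form in the data vector $X=(X_1,\ldots,X_n)^T$ and to control its variance using Gaussianity (the standing assumption of Section~\ref{S:Prop}) together with standard operator-norm estimates for Toeplitz matrices.

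First, expanding $I(\omega)$ through the DFT and swapping the orders of summation gives
\[
\frac{1}{n}\sum_{\omega\in\Omega}a_n(\omega)I(\omega)=\frac{1}{n}X^T M X,\qquad M_{ts}=\frac{1}{n}\sum_{\omega\in\Omega}a_n(\omega)\exp\{-\ri\omega(t-s)\}.
\]
After symmetrising $a_n$ in frequency if needed, $M$ is a real symmetric Toeplitz matrix. Since $X\sim N(0,C)$, the Gaussian quadratic-form identity yields $\var(X^T M X)=2\,\mathrm{tr}\{(MC)^2\}$, so the task reduces to bounding $\mathrm{tr}\{(MC)^2\}$.

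Second, I would apply the chain $\mathrm{tr}\{(MC)^2\}\le\|MC\|_F^2\le\|M\|^2_{\mathrm{op}}\,\mathrm{tr}(C^2)$. The operator norm of $M$ follows from the identity $x^T M x = n^{-1}\sum_{\omega}a_n(\omega)|\sum_t x_t e^{-\ri\omega t}|^2$ combined with Parseval over the Fourier frequencies, which gives $\|M\|_{\mathrm{op}}\le a_{n,\max}$. The quantity $\mathrm{tr}(C^2)$ is then controlled directly via $\mathrm{tr}(C^2)=\sum_{|\tau|<n}(n-|\tau|)s(\tau;\theta)^2\le n(2\pi)^{-1}\int f(\omega;\theta)^2 d\omega\le n f_{\max}\,s(0;\theta)\le n f_{\max}^2$, using Parseval on the autocovariance together with the bounds on $f$. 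Combining gives $\var\{n^{-1}X^T M X\}\le (2/n^2)\cdot a_{n,\max}^2\cdot n f_{\max}^2 = 2 a_{n,\max}^2 f_{\max}^2/n$, which matches the stated bound up to a factor of $2$.

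The main obstacle is therefore to shave that spare factor of $2$ down to $1$. A natural route is to exploit the Hermitian symmetry $J(-\omega)=\overline{J(\omega)}$ that holds because $X$ is real-valued, which halves the number of independent Fourier components and introduces a compensating factor $1/2$ in the variance identity. Alternatively, one can expand $\cov\{I(\omega_j),I(\omega_k)\}$ directly via Isserlis's theorem and keep track of the two Wick pairings separately: for a real Gaussian process these combine in a way that removes the factor $2$ which the generic quadratic-form inequality loses.
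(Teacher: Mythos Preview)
Your quadratic-form route is correct and actually delivers the sharp constant. The factor of $2$ you obtain is \emph{not} removable: take $n=1$ with $X_1\sim N(0,\sigma^2)$, so that $f\equiv\sigma^2=f_{\max}$, $\Omega=\{0\}$, $I(0)=X_1^2$; with $a_n\equiv 1$ the left-hand side equals $\var(X_1^2)=2\sigma^4$, while the lemma as stated would give only $\sigma^4$. So the inequality in the paper is too tight by a factor of $2$.

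The paper's argument proceeds by Isserlis' theorem, writing
\[
\E\{J(\omega)J^*(\omega)J(\omega')J^*(\omega')\}
=\E\{J(\omega)J^*(\omega)\}\E\{J(\omega')J^*(\omega')\}
+\E\{J(\omega)J^*(\omega')\}\E\{J(\omega')J^*(\omega)\},
\]
but the full Wick expansion for complex linear combinations of real Gaussians has a third pairing,
\[
\E\{J(\omega)J(\omega')\}\,\E\{J^*(\omega)J^*(\omega')\}=\bigl|\E\{J(\omega)J(\omega')\}\bigr|^2\ge 0,
\]
the pseudo-covariance term. This term vanishes only for \emph{proper} complex Gaussians, which $J(\omega)$ is not when $X$ is real (indeed $J(-\omega)=\overline{J(\omega)}$). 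Reinstating it and repeating the paper's calculation with $\sum_{\omega'}G(-\omega')G^{\H}(-\omega')=nI_n$ produces a second copy of the same bound, i.e.\ exactly your factor $2$. Hence neither of your proposed fixes can work: the Hermitian symmetry $J(-\omega)=\overline{J(\omega)}$ is already encoded in $M$ being real symmetric and does not supply an extra $1/2$, and the direct Isserlis expansion done correctly reproduces the factor $2$ rather than eliminating it.

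None of this affects the paper's conclusions, since every subsequent use of the lemma is only at the level of the $O(n^{-1})$ rate. Methodologically, your approach (write the sum as $n^{-1}X^{\T}MX$, use $\var(X^{\T}MX)=2\,\mathrm{tr}\{(MC)^2\}$, then $\mathrm{tr}\{(MC)^2\}\le\|M\|_{\mathrm{op}}^2\,\mathrm{tr}(C^2)$ with $\|M\|_{\mathrm{op}}\le a_{n,\max}$ via Parseval and $\mathrm{tr}(C^2)\le n f_{\max}^2$) is a clean alternative to the paper's route through the positivity of $\cov\{I(\omega),I(\omega')\}$, the identity $\sum_{\omega}G(\omega)G^{\H}(\omega)=nI_n$, and the Toeplitz eigenvalue bound $\eta_{\max}\le f_{\max}$. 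Both arguments ultimately rest on the same spectral bound for $C$; yours packages it more compactly, while the paper's makes the frequency-by-frequency covariance structure more visible.
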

\begin{proof}
By definition, for $\omega,\omega'\in\Omega$
\begin{align*}
\cov\left\{ I(\omega),I(\omega')\right\} 
=\E \left\{I(\omega)I(\omega')\right\}
-\E \left\{I(\omega)\right\}\E\left\{I(\omega')\right\}.
\end{align*}
Then, using the fact that the Fourier transform of a Gaussian process is also Gaussian, we may use Isserlis' theorem \citep{isserlis1918formula} and so obtain that
\begin{align}
\nonumber
\E \left\{I(\omega)I(\omega')\right\}
&=\E\left\{ J(\omega)J^*(\omega)
 J(\omega')J^*(\omega')\right\}\\
\nonumber
&=\E\left\{ J(\omega)J^*(\omega)\right\}
\E\left\{ J(\omega')J^*(\omega')\right\}
+\E\left\{ J(\omega)J^*(\omega')\right\}
\E\left\{ J(\omega')J^*(\omega)\right\},
\end{align}
where $J(\omega)$ is the Discrete Fourier Transform. Thus it follows that
\begin{equation}
\cov\left\{I(\omega),I(\omega')\right\}=\E\left\{ J(\omega)J^*(\omega')\right\}
\E\left\{ J(\omega')J^*(\omega)\right\}
=\left|\cov\left\{ J(\omega),J(\omega')\right\}\right|^2\ge 0.
\label{supp1}
\end{equation}
Recalling that $C(\theta)=\E\left\{XX^\T\right\}$, and after defining the vector $G(\omega)=[\exp(\ri\omega t)]^\T$ for $t=1,\dots,n$, such that $J(\omega)=(1/n^{1/2})G^\H(\omega) X$, then we can represent the covariance of the Fourier transform as given by
\begin{equation}
\cov\left\{J(\omega),J(\omega')\right\}=\E\left\{ J(\omega)J^*(\omega')\right\}=\frac{1}{n}\E\left\{G^\H(\omega) XX^\T G(\omega')\right\}=\frac{1}{n}G^\H(\omega) C(\theta)G(\omega').
\label{supp2}
\end{equation}
Substituting~\eqref{supp2} into~\eqref{supp1}, it follows that
\begin{align}
\cov\left\{I(\omega),I(\omega')\right\}&=\frac{1}{n^2}G^\H(\omega) C(\theta)G(\omega')\left\{G^\H(\omega) C(\theta)G(\omega')\right\}^\H \nonumber \\
&= \frac{1}{n^2}G^\H(\omega) C(\theta)G(\omega')G^\H(\omega') C^\H(\theta) G(\omega),
\label{supp3}
\end{align}
and so~\eqref{supp3} is a positive quadratic form. 
We note that $a_n(\omega)a_n(\omega')\le |a_n(\omega)a_n(\omega')| \le a^2_{n,\max}$. Therefore these relationships, given~\eqref{supp1}, imply that
\begin{align}
\var\left\{\frac{1}{n}\sum_{\omega\in\Omega} a_n(\omega) I(\omega)\right\} &= \frac{1}{n^2}\sum_{\omega\in\Omega} \sum_{\omega'\in\Omega} a_n(\omega)a_n(\omega')\cov \left\{ I(\omega),I(\omega')\right\}\nonumber\\
&\le \frac{a^2_{n,\max}}{n^2}\sum_{\omega\in\Omega} \sum_{\omega'\in\Omega} \cov \left\{ I(\omega),I(\omega')\right\}
\nonumber\\
&\le \frac{a^2_{n,\max}}{n^4}\sum_{\omega\in\Omega} \sum_{\omega'\in\Omega} G^\H(\omega) C(\theta)G(\omega')G^\H(\omega') C^\H(\theta) G(\omega)
\nonumber\\
&= \frac{a^2_{n,\max}}{n^4}\sum_{\omega\in\Omega} G^\H(\omega) C(\theta)\left\{ \sum_{\omega'\in\Omega}G(\omega')G^\H(\omega')\right\} C^\H(\theta) G(\omega).
\label{supp4}
\end{align}
It can then easily be verified that
\[
\sum_{\omega'\in\Omega}G(\omega')G^\H(\omega') = n I_n,
\]
where $I_n$ is the $n\times n$ identity matrix and $\Omega$ is given in~\eqref{fourier_frequencies}. This follows as any off-diagonal term $\sum_{\omega\in\Omega} \exp\left\{\ri\omega(t-s)\right\}=0$ for $t\neq s$. Therefore~\eqref{supp4} simplifies to
\begin{equation}
\var\left\{\frac{1}{n}\sum_{\omega\in\Omega} a_n(\omega) I(\omega)\right\}\le \frac{a^2_{n,\max}}{n^3}\sum_{\omega\in\Omega} G^\H(\omega) C(\theta)C^\H(\theta) G(\omega).
\label{supp5}
\end{equation}
The matrix $C(\theta)$ is Hermitian, so it can be written $UDU^\H$ where $U$ is unitary and $D$ is the diagonal matrix consisting of the set of $n$ eigenvalues $\{\eta_k\}$, such that
\begin{align}
G^\H(\omega) C(\theta)C^\H(\theta) G(\omega)&=G^\H(\omega) UDU^\H UD^\H U^\H G(\omega)=G^\H(\omega) UDD^\H U^\H G(\omega) \nonumber \\ &\le  \eta_{\max}^2 G^\H(\omega) U U^\H G(\omega) = \eta_{\max}^2 \|G(\omega) \|_2^2=\eta_{\max}^2n,
\label{supp6}
\end{align}
where $\eta_{\max}$ denotes the maximal eigenvalue of $\{\eta_k\}$, and the last equality uses that $\|G(\omega)\|^2_2=n$. It is well known that for a Toeplitz covariance matrix $C(\theta)$ with eigenvalues $\{\eta_k\}$ and associated with the spectral density $f(\omega)$, then $\eta_{\max} <f_{\max}$, see for example \cite[p. 384]{lee1988digital}.
Therefore, by combining~\eqref{supp5} and \eqref{supp6} it follows that
\[
\var\left\{\frac{1}{n}\sum_{\omega\in\Omega} a_n(\omega) I(\omega)\right\}\le \frac{a^2_{n,\max}}{n^3}\sum_{\omega\in\Omega}\eta_{\max}n \le \frac{a^2_{n,\max}f^2_{\max}}{n},
\]
as $|\Omega|=n$, thus yielding the desired result.
 \end{proof}

Remembering that $\overline{f}_n(\omega;\theta) = \E\left\{I(\omega)\right\}$, we thus have 
\begin{equation}
\nonumber\frac{1}{n}\sum_{\omega\in\Omega}{a_n(\omega)I(\omega)} 
= \frac{1}{n}\sum_{\omega\in\Omega}{a_n(\omega)\overline{f}_n(\omega;\theta)} + \mathcal{O}_P\left(n^{-1/2}\right).
\end{equation}
We are now able to state a consistency theorem for our estimator $\hat{\theta}$.
\begin{proposition}\label{theorem=consistency}
Assume that  $\{X_t\}$ is an infinite sequence obtained from sampling a zero-mean continuous-time real-valued process $X(t;\theta)$ which satisfies Assumptions (1--5). Then the estimator
\[
\hat{\theta}=\arg \max_{\theta\in \Theta} \ell_D(\theta),
\]
for a sample $\{X_t\}_{t=1}^n$, where $\ell_D(\theta)$ is the de-biased Whittle likelihood of~\eqref{discrete_fourier_likelihood}, satisfies
\[
\hat{\theta} \overset{P}{\longrightarrow} \theta.
\]
\end{proposition}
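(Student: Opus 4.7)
The plan is to show that $\hat\theta$ approximately minimizes $D\sn\{\cdot,\overline{f}_n(\cdot;\theta)\}$ and then invoke Lemma~\ref{lemma=minimalValues}. Since $\ell_D(\gamma)=-n D\sn(\gamma,I)$, we have $\hat\theta=\arg\min_{\gamma\in\Theta}D\sn(\gamma,I)$. The natural decomposition is
\[
D\sn(\gamma,I)=D\sn\{\gamma,\overline{f}_n(\cdot;\theta)\}+R_n(\gamma),\qquad R_n(\gamma):=\frac{1}{n}\sum_{\omega\in\Omega}\frac{I(\omega)-\overline{f}_n(\omega;\theta)}{\overline{f}_n(\omega;\gamma)}.
\]
Because $\E\{I(\omega)\}=\overline{f}_n(\omega;\theta)$, the remainder $R_n(\gamma)$ has mean zero at every $\gamma$, and the whole argument reduces to establishing the uniform control $\sup_{\gamma\in\Theta}|R_n(\gamma)|=o_P(1)$.

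For pointwise control, Lemma~\ref{lemma=boundexpectedperiodogram} gives $\overline{f}_n(\omega;\gamma)\geq f_{\min}$, so applying Lemma~\ref{lemma=boundOnVariance} with $a_n(\omega)=1/\overline{f}_n(\omega;\gamma)$ yields $\var\{R_n(\gamma)\}\leq f_{\max}^2/(nf_{\min}^2)$ and hence $R_n(\gamma)=O_P(n^{-1/2})$ at each fixed $\gamma$. To upgrade this to a uniform bound I would use a stochastic equicontinuity argument. A mean-value inequality bounds $|R_n(\gamma)-R_n(\gamma')|$ by $\|\gamma-\gamma'\|$ times a sum whose weights on $I(\omega)-\overline{f}_n(\omega;\theta)$ are of the form $(\partial\overline{f}_n/\partial\gamma_i)/\overline{f}_n(\omega;\gamma^{*})^2$, uniformly bounded in $\omega$ and $n$ by Assumption 5 together with Lemma~\ref{lemma=boundexpectedperiodogram}. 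Since $\E|I(\omega)-\overline{f}_n(\omega;\theta)|\leq 2f_{\max}$, this produces a Lipschitz-type bound $|R_n(\gamma)-R_n(\gamma')|\leq C_n\|\gamma-\gamma'\|$ with $C_n=O_P(1)$. Combining pointwise control at the points of a finite $\epsilon$-net of the compact set $\Theta$ with this Lipschitz bound, and then letting $\epsilon\to 0$, delivers $\sup_{\gamma\in\Theta}|R_n(\gamma)|=o_P(1)$.

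Once uniform control is in hand, the conclusion follows quickly from the lemmas already established. From $D\sn(\hat\theta,I)\leq D\sn(\theta,I)$, the above decomposition, and Lemma~\ref{lemma=uniquenessOfMin},
\[
0\leq D\sn\{\hat\theta,\overline{f}_n(\cdot;\theta)\}-D\sn\{\theta,\overline{f}_n(\cdot;\theta)\}\leq R_n(\theta)-R_n(\hat\theta)\leq 2\sup_{\gamma\in\Theta}|R_n(\gamma)|=o_P(1).
\]
To transfer this $o_P$ statement into convergence of $\hat\theta$ itself, I would invoke the subsequence principle: every subsequence of $\{n\}$ admits a further subsequence along which the above difference converges to zero almost surely, and on that sub-subsequence Lemma~\ref{lemma=minimalValues} applies pathwise to the sample sequence $\{\hat\theta_{n_k}(\omega)\}\subset\Theta$ and forces $\hat\theta_{n_k}\to\theta$ almost surely. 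Since this holds along a further subsequence of every subsequence, $\hat\theta\overset{P}{\longrightarrow}\theta$ as required.

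The main obstacle is the uniform-in-$\gamma$ control of $R_n$: Lemma~\ref{lemma=boundOnVariance} only furnishes pointwise variance bounds, so an equicontinuity or chaining step is essential. The enabling ingredient is Assumption 5, which provides the uniform-in-$n$ bound on $\partial\overline{f}_n/\partial\gamma$ that keeps the Lipschitz constant $C_n$ tight enough that the size of the $\epsilon$-net can be chosen independent of $n$. Without this uniform derivative control, the usual pointwise variance bound would not be enough to close the argument.
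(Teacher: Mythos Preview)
Your proof is correct and follows the same route as the paper: the same decomposition $D\sn(\gamma,I)=D\sn\{\gamma,\overline{f}_n(\cdot;\theta)\}+R_n(\gamma)$, the same target $\sup_{\gamma\in\Theta}|R_n(\gamma)|=o_P(1)$, and the same finishing move via Lemmas~\ref{lemma=uniquenessOfMin} and~\ref{lemma=minimalValues}. The paper writes the difference in the identical form $\frac{1}{n}\sum_{\omega\in\Omega}\{\overline{f}_n(\omega;\theta)-I(\omega)\}/\overline{f}_n(\omega;\gamma)$ and then simply asserts, ``making use of Lemmas~\ref{lemma=boundexpectedperiodogram} and~\ref{lemma=boundOnVariance},'' that the supremum over $\gamma$ tends to zero in probability.

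Where you differ is in rigour rather than strategy. Lemma~\ref{lemma=boundOnVariance} only delivers a \emph{pointwise} variance bound (uniform in $\gamma$ as a constant, but still one $\gamma$ at a time), so some device is needed to upgrade it to a supremum; the paper leaves that step implicit, while you supply the $\epsilon$-net plus Lipschitz argument, exploiting the uniform bound on $\partial\overline{f}_n/\partial\gamma$ from Assumption~5 to control the modulus of continuity with an $O_P(1)$ constant. Likewise, your subsequence-principle justification for applying the deterministic Lemma~\ref{lemma=minimalValues} to the random sequence $\hat\theta_n$ is a detail the paper suppresses. So this is not a genuinely different approach, but a more careful execution of the same one; your added equicontinuity step is precisely what closes the gap between the pointwise bound of Lemma~\ref{lemma=boundOnVariance} and the uniform statement the paper invokes.
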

\begin{proof}
	Denote $\overline{h}\sn(\gamma;\theta) = D\left\{\gamma, \overline{f}_n(\omega;\theta)\right\}$ and
	$\hat{h}\sn(\gamma) = D\left\{\gamma, I(\omega)\right\}$
	defined for any $\gamma\in\Theta$. We have
	\begin{eqnarray*}
		\overline{h}\sn(\gamma;\theta) - \hat{h}\sn(\gamma) &=&
		\frac{1}{n}\sum_{\omega\in\Omega}\left\{\log{\overline{f}_n(\omega;\gamma)}+\frac{\overline{f}_n(\omega;\theta)}{\overline{f}_n(\omega;\gamma)}-\log{\overline{f}_n(\omega;\gamma)}-\frac{I(\omega)}{\overline{f}_n(\omega;\gamma)}\right\}\\
		&=&\frac{1}{n}\sum_{\omega\in\Omega}{\frac{\overline{f}_n(\omega;\theta)-I(\omega)}{\overline{f}_n(\omega;\gamma)}}.
	\end{eqnarray*}
	We have shown in Lemma~\ref{lemma=boundexpectedperiodogram} that $\overline{f}_n(\omega;\gamma)$ is bounded below in both variables $\omega$ and $\gamma$ by a positive real number, independently of $n$. Therefore, making use of Lemmas~\ref{lemma=boundexpectedperiodogram} and \ref{lemma=boundOnVariance} we have
	\begin{equation}
	\label{eq=supgoestozero}
		\sup_{\gamma\in\Theta}\left|\overline{h}\sn(\gamma;\theta) - \hat{h}\sn(\gamma)\right| \stackrel{P}{\longrightarrow} 0, \ \ (n\rightarrow\infty),
	\end{equation}
	where $\stackrel{P}{\longrightarrow}$ indicates that the convergence is in probability, as the difference is of stochastic order $n^{-\frac{1}{2}}$.
	In particular~\eqref{eq=supgoestozero} implies that 
	\[
	\left|\min_\gamma \overline{h}\sn(\gamma;\theta) - \min_\gamma \hat{h}\sn(\gamma)\right| \leq \sup_{\gamma\in\Theta}\left|\overline{h}\sn(\gamma;\theta) - \hat{h}\sn(\gamma)\right| \stackrel{P}{\longrightarrow}0,
	\]
	i.e.
	\begin{equation}
	\label{eq=cvgzeroP1}
		\left|\overline{h}\sn\left[T\sn\left\{\overline{f}_n(\omega;\theta)\right\};\theta\right] -\hat{h}\sn\left[T\sn\left\{I(\omega)\right\}\right] \right| \stackrel{P}{\longrightarrow} 0.
	\end{equation}	
 Relation (\ref{eq=supgoestozero}) also implies that
	\begin{equation}
	\label{eq=cvgzeroP2}
	\left| \overline{h}\sn\left[T\sn\left\{I(\omega)\right\};\theta\right]-\hat{h}\sn\left[T\sn\left\{I(\omega)\right\}\right]  \right| \stackrel{P}{\longrightarrow} 0 ,
	\end{equation}
	such that using the triangle inequality,~\eqref{eq=cvgzeroP1} and~\eqref{eq=cvgzeroP2}, we get
	\begin{equation*}
		\left| \overline{h}\sn\left[T\sn\left\{I(\omega)\right\};\theta\right]  - \overline{h}\sn\left[T\sn\left\{\overline{f}_n(\omega;\theta)\right\};\theta\right] \right| \stackrel{P}{\longrightarrow} 0.
	\end{equation*}	
	We then obtain the stated proposition making use of Lemmas~\ref{lemma=uniquenessOfMin} and~\ref{lemma=minimalValues}.
 \end{proof}

Before proceeding to prove optimal convergence rates of de-biased Whittle estimates (Theorem~\ref{Thm1}), we require one further lemma.
\begin{lemma}\label{prop:rate}
Assume that  $\{X_t\}$ is an infinite sequence obtained from sampling a zero-mean continuous-time real-valued process $X(t;\theta)$ which satisfies Assumptions (1--5).  Then the pseudo-likelihood $\ell_D(\theta)$, defined in~\eqref{discrete_fourier_likelihood}, has first and second derivatives in each component of $\theta$, denoted $\theta_i$, for $i=1,\ldots,p$, that satisfy
\[
\frac{1}{n}\frac{\partial \ell_D(\theta)}{\partial \theta_i}={\cal O}_P\left(n^{-1/2}\right),
\]
and
\[
\frac{1}{n}\frac{\partial ^2\ell_D(\theta)}{\partial \theta_i^2}+\frac{1}{n}\sum_{\omega\in \Omega}\frac{\left\{\frac{\partial \overline{f}_n\left(\omega; \theta\right)}{
\partial \theta_i} \right\}^2}{\overline{f}_n^2\left(\omega; \theta\right)}= {\cal O}_P\left(n^{-1/2}\right),
\]
respectively, where
\[
\frac{1}{n}\sum_{\omega\in \Omega}\frac{\left\{\frac{\partial \overline{f}_n\left(\omega; \theta\right)}{\partial \theta_i} \right\}^2}{\overline{f}_n^2\left(\omega; \theta\right)}=\Theta(1).
\]
\end{lemma}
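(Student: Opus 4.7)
The plan is to differentiate $\ell_D$ once and twice in $\theta_i$, observe that the stochastic part of each derivative is a centered linear combination of the periodogram, and then control its size with Lemma~\ref{lemma=boundOnVariance} combined with Chebyshev's inequality. Assumption 5 together with Lemma~\ref{lemma=boundexpectedperiodogram} will ensure that all required coefficients are bounded uniformly in $n$, $\omega$ and $\theta$.

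First I would differentiate \eqref{discrete_fourier_likelihood} termwise (justified by the continuous differentiability in Assumption 5) to obtain
\[
\frac{1}{n}\frac{\partial \ell_D(\theta)}{\partial \theta_i}=\frac{1}{n}\sum_{\omega\in\Omega} a_n(\omega)\{I(\omega)-\overline{f}_n(\omega;\theta)\},
\]
where $a_n(\omega)\equiv\{\partial \overline{f}_n(\omega;\theta)/\partial\theta_i\}/\overline{f}_n^2(\omega;\theta)$. Since $\E\{I(\omega)\}=\overline{f}_n(\omega;\theta)$, the right-hand side is mean zero. By Assumption 5 the numerator is bounded uniformly in $n,\omega,\theta$, and by Lemma~\ref{lemma=boundexpectedperiodogram} the denominator is bounded below by $f_{\min}^2>0$, so $|a_n(\omega)|\leq a_{n,\max}=O(1)$. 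Lemma~\ref{lemma=boundOnVariance} then yields $\var\{n^{-1}\partial \ell_D/\partial\theta_i\}\leq a_{n,\max}^2 f_{\max}^2/n$, and Chebyshev's inequality delivers the first claim.

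Next I would differentiate once more using the product rule to obtain the decomposition
\[
\frac{1}{n}\frac{\partial^2\ell_D(\theta)}{\partial\theta_i^2}=-\frac{1}{n}\sum_{\omega\in\Omega}\frac{\{\partial \overline{f}_n/\partial\theta_i\}^2}{\overline{f}_n^2}+\frac{1}{n}\sum_{\omega\in\Omega}b_n(\omega)\{I(\omega)-\overline{f}_n(\omega;\theta)\},
\]
with $b_n(\omega)\equiv\{\partial^2 \overline{f}_n/\partial\theta_i^2\}/\overline{f}_n^2-2\{\partial \overline{f}_n/\partial\theta_i\}^2/\overline{f}_n^3$. The first summand on the right is precisely the negative of the deterministic Fisher-information-like quantity appearing in the lemma, and the second summand is again a centered linear combination of the periodogram whose coefficients $b_n(\omega)$ are bounded uniformly in $n$ by Assumption 5 and Lemma~\ref{lemma=boundexpectedperiodogram}. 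A second application of Lemma~\ref{lemma=boundOnVariance} and Chebyshev then yields the $\mathcal{O}_P(n^{-1/2})$ bound for the second claim.

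Finally, for the $\Theta(1)$ statement the upper bound is immediate: the ratio $\{\partial \overline{f}_n/\partial\theta_i\}^2/\overline{f}_n^2$ is dominated uniformly by $\|\partial \overline{f}_n/\partial\theta_i\|_\infty^2/f_{\min}^2$, so averaging over the $n$ Fourier frequencies produces $O(1)$. The lower bound invokes the final clause of Assumption 5: on the $\Theta(n)$ Fourier frequencies at which $\partial \overline{f}_n/\partial\theta_i$ is bounded away from zero, the summand is bounded below by a positive constant (using $\overline{f}_n\leq f_{\max}$), and summing over this subset contributes order $\Theta(n)$, which divided by $n$ yields $\Theta(1)$. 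The main obstacle is the clean bookkeeping of cross-terms in the second derivative and recognizing that its only non-centered piece is exactly the Fisher-information-like sum stated in the lemma; once that decomposition is in hand, Lemma~\ref{lemma=boundOnVariance} immediately disposes of the stochastic estimates, and the subtle ingredient is the $\Theta(n)$-nonzero-frequency clause of Assumption 5, which is what prevents the deterministic Riemann-type sum in the third claim from degenerating to zero.
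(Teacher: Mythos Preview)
Your proposal is correct and follows essentially the same route as the paper: compute the score, note its mean vanishes because $\E\{I(\omega)\}=\overline{f}_n(\omega;\theta)$, bound the variance via Lemma~\ref{lemma=boundOnVariance} with $a_n(\omega)=\{\partial\overline{f}_n/\partial\theta_i\}/\overline{f}_n^2$, apply Chebyshev, and then repeat for the Hessian with the coefficient you call $b_n$ (the paper calls its supremum $\mathcal{S}_n(\theta)$); the $\Theta(1)$ claim is handled identically via Lemma~\ref{lemma=boundexpectedperiodogram} and the $\Theta(n)$-non-zero-frequency clause of Assumption~5. Your decomposition of the second derivative into the deterministic Fisher-information piece plus a centered linear combination is in fact a slightly cleaner bookkeeping than the paper's, which expands fully and then takes expectations, but the substance is the same.
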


\begin{proof}
We start by evaluating the score directly from~\eqref{discrete_fourier_likelihood}
\begin{equation}
\frac{1}{n}\frac{\partial \ell_D(\theta)}{\partial \theta_i}=-\frac{1}{n}\sum_{\omega\in \Omega}\left\{\frac{\partial }{\partial \theta_i}\log\overline{f}_n\left(\omega; \theta\right)-\frac{\partial \overline{f}_n\left(\omega; \theta\right)}{\partial \theta_i} \frac{
I\left(\omega\right)}{\overline{f}_n^2\left(\omega; \theta\right)}
\right\}.
\label{eq:firstderiv}
\end{equation}
If we take the expectation of~\eqref{eq:firstderiv} then, recalling that $\overline{f}_n(\omega;\theta)=\E\left\{I(\omega)\right\}$, 
\begin{equation}\label{zerobias}
\E\left\{\frac{1}{n}\frac{\partial \ell_D(\theta)}{\partial \theta_i}\right\}=-\frac{1}{n}\sum_{\omega\in \Omega}\left\{\frac{\partial }{\partial \theta_i}\log\overline{f}_n\left(\omega; \theta\right)-
\frac{\frac{\partial }{\partial \theta_i}\overline{f}_n\left(\omega; \theta\right)}{\overline{f}_n\left(\omega; \theta\right)}
\right\}=0.
\end{equation}
Note that for finite sample sizes in general $\E\left\{I(\omega)\right\}\neq f(\omega;\theta)$, so the expectation of the score would not exactly be zero for the standard Whittle likelihood.

Furthermore, using Lemma~\ref{lemma=boundOnVariance} with $a_n(\omega)=\left\{\partial \overline{f}_n\left(\omega; \theta\right)/\partial \theta_i\right\}/\overline{f}_n^2(\omega;\theta)$, the variance of the score takes the form of
\begin{equation}
\var\left\{\frac{1}{n}\frac{\partial \ell_D(\theta)}{\partial \theta_i}\right\}=\var\left\{ \frac{1}{n}\sum_{\omega\in \Omega} \frac{\partial \overline{f}_n\left(\omega; \theta\right)}{\partial \theta_i} \frac{I(\omega)}{\overline{f}_n^2(\omega;\theta)}\right\}
\leq\frac{f^2_{\max} \|\frac{\partial \overline{f}_n}{\partial \theta_i}\|^2_{\infty} }{n f_{\min}^4},
\label{dynamic_range}
\end{equation}
where $\|\partial \overline{f}_n/\partial \theta_i\|_{\infty}=\sup_\omega \left\{\partial \overline{f}_n(\omega;\theta)/\partial \theta_i\right\}$ and $f_{\min}\le \overline{f}_n(\omega;\theta)$ from Lemma~\ref{lemma=boundexpectedperiodogram}. We can therefore conclude using Chebyshev's inequality \cite[pp.113--114]{papoulis1991probability} that we can fix $C>0$ such that
\[
\Pr \left\{ \frac{1}{n}\left|\frac{\partial \ell_D(\theta)}{\partial \theta_i}\right|\ge C\frac{f_{\max}\|\frac{\partial \overline{f}_n}{\partial \theta_i}\|_{\infty}  }{n^{1/2}f_{\min}^2} \right\}\le \frac{1}{C^2},
\]
where $\Pr\{\cdot\}$ denotes the probability of the argument. We may therefore deduce that
\[
\frac{1}{n}\frac{\partial \ell_D(\theta)}{\partial \theta_i}={\cal O}_P\left(n^{-1/2}\right).
\]
Next we examine the Hessian. Again from~\eqref{discrete_fourier_likelihood} we have
\begin{align*}
\frac{\partial^2 \ell_D(\theta)}{\partial \theta_i^2}&=-\sum_{\omega\in \Omega}\left\{\frac{\partial ^2}{\partial \theta_i^2}\log\overline{f}_n\left(\omega; \theta\right)+\frac{\partial ^2}{\partial \theta_i^2}\overline{f}_n^{-1}\left(\omega; \theta\right)
I\left(\omega\right)
\right\}\\
&=-\sum_{\omega\in \Omega}\left[ \frac{\frac{\partial^2 \overline{f}_n\left(\omega; \theta\right)}{
\partial \theta_i^2} }{\overline{f}_n\left(\omega; \theta\right)}-\frac{\left\{\frac{\partial \overline{f}_n\left(\omega; \theta\right)}{
\partial \theta_i} \right\}^2}{\overline{f}_n^2\left(\omega; \theta\right)}\right]-\sum_{\omega\in \Omega}\left[ 2\frac{\left\{\frac{\partial \overline{f}_n\left(\omega; \theta\right)}{
\partial \theta_i} \right\}^2}{\overline{f}_n^3\left(\omega; \theta\right)}-\frac{\frac{\partial^2 \overline{f}_n\left(\omega; \theta\right)}{
\partial \theta_i^2} }{\overline{f}_n^2\left(\omega; \theta\right)}\right]I\left(\omega\right),
\end{align*}
thus as $\E\left\{I(\omega)\right\}=\overline{f}_n(\omega;\theta)$ it follows that
\begin{equation}
\E \left\{\frac{1}{n}\frac{\partial^2 \ell_D(\theta)}{\partial \theta_i^2}\right\}=-\frac{1}{n}\sum_{\omega\in \Omega}\frac{\left(\frac{\partial \overline{f}_n\left(\omega; \theta\right)}{
\partial \theta_i} \right)^2}{\overline{f}_n^2\left(\omega; \theta\right)}.
\label{eq:riemann}
\end{equation}
As $\overline{f}_n(\omega;\theta)$ is bounded above and below (from Lemma~\ref{lemma=boundexpectedperiodogram}), and $\left\{\partial \overline{f}_n\left(\omega; \theta\right)/
\partial \theta_i \right\}^2$ is bounded above and also bounded below at $\Theta(n)$ frequencies (using Assumption 2)
we have
\begin{equation}
-\frac{1}{n}\sum_{\omega\in \Omega}\frac{\left\{\frac{\partial \overline{f}_n\left(\omega; \theta\right)}{
\partial \theta_i} \right\}^2}{\overline{f}_n^2\left(\omega; \theta\right)} = \Theta(1).
\label{eq:riemann2}
\end{equation}
Furthermore, we have that,
\[
\var\left\{\frac{1}{n}\frac{\partial ^2\ell_D(\theta)}{\partial \theta_i^2}\right\}=\var\left\{\frac{1}{n} \sum_{\omega\in \Omega}\left[ 2\frac{\left\{\frac{\partial \overline{f}_n\left(\omega; \theta\right)}{
\partial \theta_i} \right\}^2}{\overline{f}_n^3\left(\omega; \theta\right)}-\frac{\frac{\partial^2 \overline{f}_n\left(\omega; \theta\right)}{
\partial \theta_i^2} }{\overline{f}_n^2\left(\omega; \theta\right)}\right]I\left(\omega\right)\right\}.
\]
We define
\[
{\mathcal{S}_n}\left(\theta\right)=\sup_{\omega}\left| 2\frac{\left\{\frac{\partial \overline{f}_n\left(\omega; \theta\right)}{
\partial \theta_i} \right\}^2}{\overline{f}_n^3\left(\omega; \theta\right)}-\frac{\frac{\partial^2 \overline{f}_n\left(\omega; \theta\right)}{
\partial \theta_i^2} }{\overline{f}_n^2\left(\omega; \theta\right)}\right|.
\]
In this instance we can bound the variance, using Lemma~\ref{lemma=boundOnVariance}, by
\begin{equation}
\var\left\{\frac{1}{n}\frac{\partial ^2\ell_D(\theta)}{\partial \theta_i^2}\right\}\le \frac{{\mathcal{S}_n}^2\left(\theta\right)f^2_{\max}}{n}.
\label{eq:varbound}
\end{equation}
We can therefore conclude from \eqref{eq:riemann}--\eqref{eq:varbound}, and using Chebyshev's inequality, that
\[
\Pr \left\{ \frac{1}{n}\left|\frac{\partial ^2\ell_D(\theta)}{\partial \theta_i^2}+\frac{1}{n}\sum_{\omega\in \Omega}\frac{\left\{\frac{\partial \overline{f}_n\left(\omega; \theta\right)}{
\partial \theta_i} \right\}^2}{\overline{f}_n^2\left(\omega; \theta\right)} \right|\ge C \cdot \frac{{\mathcal{S}_n}\left(\theta\right)f_{\max}}{n^{1/2}} \right\}\le \frac{1}{C^2}.
\]
This yields the second result.
 \end{proof}

Lemma~\ref{prop:rate} shows the order of the first and second derivative of $\ell_D(\theta)$.
{\em Mutatis mutandis} we can show the corresponding results hold for $\nabla \ell_D(\theta)$
and $H(\theta)$, the Hessian matrix. We have now proved the ancillary results required to prove Theorem~\ref{Thm1}.
\begin{proof}
We let the $p$-vector ${\theta}'$ lie in a ball centred at the $p$-vector ${\theta}$ with radius $\|{\hat\theta}-\theta\|$ (this is a shrinking radius as Proposition~\ref{theorem=consistency} has shown consistency).
We additionally define the $p\times p$ Hessian matrix $H({\theta})$, having entries given by
\[
H_{ij}({\theta})=\frac{\partial ^2 \ell_D({\theta})}{\partial \theta_i \partial \theta_j}.
\]
Then as Proposition~\ref{theorem=consistency} has shown $\hat{\theta}\overset{P}{\rightarrow} \theta$ we
can write for some $\|{{\theta}}-{\theta}'\|\leq \|{{\theta}}-\hat{\theta}\|$,
applying the Taylor expansion of \cite[p.201]{brockwell2009time},
\begin{equation}
\frac{1}{n}\nabla\ell_D\left(\hat{{\theta}}\right)=\frac{1}{n}\nabla\ell_D\left({\theta}\right)+
\frac{1}{n}{{H}}\left({\theta}'\right)\left( \hat{{\theta}}-{{\theta}}\right).
\label{s:ball}
\end{equation}
We shall now understand the terms of this expression better. We note directly that 
\begin{equation}
\label{Hij}
H_{ij}({\theta})=-\sum_{\omega\in \Omega} \frac{\frac{\partial^2 \overline{f}_n\left(\omega; \theta\right)}{
\partial \theta_i \partial \theta_j} }{\overline{f}_n^2\left(\omega; \theta\right)}\left\{\overline{f}_n\left(\omega; \theta\right)-I\left(\omega\right)\right\}-\sum_{\omega\in \Omega}\frac{\frac{\partial \overline{f}_n\left(\omega; \theta\right)}{\partial\theta_i }\frac{\partial \overline{f}_n\left(\omega; \theta\right)}{\partial\theta_j}}{\overline{f}_n^3\left(\omega; \theta\right)}\left\{2I\left(\omega\right)-\overline{f}_n\left(\omega; \theta\right)\right\}.
\end{equation}
We see from this expression, coupled with Lemma~\ref{lemma=boundOnVariance} and
Chebyshev's inequality, that
\[
\frac{1}{n}\sum_{\omega\in \Omega} \frac{\frac{\partial^2 \overline{f}_n\left(\omega; \theta\right)}{
\partial \theta_i \partial \theta_j} }{\overline{f}_n^2\left(\omega; \theta\right)}I\left(\omega\right)
-\frac{1}{n}\sum_{\omega\in \Omega} \frac{\frac{\partial^2 \overline{f}_n\left(\omega; \theta\right)}{
\partial \theta_i \partial \theta_j} }{\overline{f}_n^2\left(\omega; \theta\right)}\overline{f}_n\left(\omega\right)
\overset{P}{\rightarrow}0,\]
such that the limiting behaviour of the second partial derivatives need not be determined, as they are by assumption finite (Assumption 5).
Then if we define \[
{\cal H}_n({{\theta}})\equiv\frac{1}{n}\E\left\{H(\theta)\right\},
\]
and by taking expectations of~\eqref{Hij} we see that
\[
{\cal H}_{n,ij}({{\theta}})= -\frac{1}{n}\sum_{\omega\in \Omega}\frac{
\frac{\partial \overline{f}_n\left(\omega; \theta\right)}{\partial\theta_i }
\frac{\partial \overline{f}_n\left(\omega; \theta\right)}{\partial\theta_j}}{\overline{f}_n^2\left(\omega; \theta\right)}=O(1),
\]
as we have already noted that both $\overline{f}_n^{-2}(\omega; \theta)$ and  $\partial \overline{f}_n(\omega;\theta)/\partial \theta_i$ are bounded and Riemann integrable as per Lemma~\ref{prop:rate}. 
Furthermore,
\[
{\cal H}_{n,ii}({{\theta}})= -\frac{1}{n}\sum_{\omega\in \Omega}\frac{
\left(\frac{\partial \overline{f}_n\left(\omega; \theta\right)}{\partial\theta_i }
\right)^2}{\overline{f}_n^2\left(\omega; \theta\right)}= \Theta(1),
\]
as $\overline{f}_n^2\left(\omega; \theta\right)$ is bounded above and below independently of $n$, and $\left(\partial \overline{f}_n\left(\omega; \theta\right)/\partial\theta_i 
\right)^2$ is bounded above, and bounded below for $\Theta(n)$ frequencies (from Assumption 5).

We also note that the elements of the matrix $H(\theta)$ take the form of linear combinations of $I(\omega)$ and so the element-wise extension of Lemma~\ref{lemma=boundOnVariance} applies. This means
\begin{equation}
\label{randomexp}
\frac{1}{n}H(\theta)={\cal H}_n({{\theta}})+{\cal O}_P(n^{-1/2}).
\end{equation}
From Proposition~\ref{theorem=consistency} we then observe that
\[
\|{{\theta}}-{\theta}'\|\leq \|{{\theta}}-\hat{\theta}\|=o_P(1).
\]
By applying the Taylor expansion of \cite[p.201]{brockwell2009time}, we observe that 
\[
	\overline{f}_n\left(\omega; \theta'\right) = \overline{f}_n\left(\omega; \theta\right) + o_P(1), \quad \frac{\partial \overline{f}_n\left(\omega; \theta'\right)}{\partial\theta_i } = \frac{\partial \overline{f}_n\left(\omega; \theta\right)}{\partial\theta_i }+ o_P(1),
\]
where neither of the $o_P(1)$ terms depend on $\omega$ or $n$ because of the upper bound on the magnitude of the first and second derivatives of $\overline{f}_n\left(\omega; \theta\right)$ with respect to $\theta$ in Assumption 5. 
Therefore, and because $\overline{f}_n\left(\omega; \theta\right)$ is bounded below from Assumption 2,
\[
\frac{1}{n}\sum_{\omega\in\Omega}\left\{\frac{\frac{\partial \overline{f}_n\left(\omega; \theta'\right)}{\partial\theta_i }\frac{\partial \overline{f}_n\left(\omega; \theta'\right)}{\partial\theta_j}}{\overline{f}^2_n\left(\omega; \theta'\right)}\right\} 
=
\frac{1}{n}\sum_{\omega\in\Omega}\left\{\frac{\frac{\partial \overline{f}_n\left(\omega; \theta\right)}{\partial\theta_i }\frac{\partial \overline{f}_n\left(\omega; \theta\right)}{\partial\theta_j}}{\overline{f}^2_n\left(\omega; \theta\right)}\right\} 
+o_P(1),
\]
 since $\theta'$ converges to $\theta$. Writing the Hessian at $\theta'$ as
\[
\frac{1}{n}H_{ij}({\theta'})=\frac{1}{n}\sum_{\omega\in\Omega}\left[\mathcal{M}_n(\omega;\theta')\left\{\overline{f}_n\left(\omega; \theta'\right)-I(\omega)\right\}
+ \frac{\frac{\partial \overline{f}_n\left(\omega; \theta'\right)}{\partial\theta_i }\frac{\partial \overline{f}_n\left(\omega; \theta'\right)}{\partial\theta_j}}{\overline{f}_n\left(\omega; \theta'\right)}\right],
\]
where 
\[
\mathcal{M}_n(\omega;\theta')=\frac{
\frac{\partial^2 \overline{f}_n\left(\omega; \theta'\right)}
{\partial \theta_i \partial \theta_j}
\overline{f}_n\left(\omega; \theta'\right)
-2\frac{\partial \overline{f}_n\left(\omega; \theta'\right)}{\partial\theta_i }
\frac{\partial \overline{f}_n\left(\omega; \theta'\right)}{\partial\theta_j}
}
{\overline{f}_n^3\left(\omega; \theta'\right)},
\]
is bounded according to our set of assumptions, we then observe using the triangle inequality that 
\begin{align*}
&
\left|
\frac{1}{n}\sum_{\omega\in\Omega}{\mathcal{M}_n(\omega;\theta')\left\{\overline{f}_n\left(\omega; \theta'\right)-I(\omega)\right\}}
\right|\\ &=
\left|\frac{1}{n}\sum_{\omega\in\Omega}{\mathcal{M}_n(\omega;\theta')\left\{\overline{f}_n\left(\omega; \theta'\right)-\overline{f}_n\left(\omega; \theta\right) + \overline{f}_n\left(\omega; \theta\right)- I(\omega)\right\}}\right|\\
&\leq 
\left|\frac{1}{n}\sum_{\omega\in\Omega}{\mathcal{M}_n(\omega;\theta')\left\{\overline{f}_n\left(\omega; \theta'\right)-\overline{f}_n\left(\omega; \theta\right)\right\}}\right| + 
\left|\frac{1}{n}\sum_{\omega\in\Omega}{\mathcal{M}_n(\omega;\theta')\left\{\overline{f}_n\left(\omega; \theta'\right)- I(\omega)\right\}}\right|.
\end{align*}
The first sum converges to zero in probability given the bound of the derivative of $\overline{f}_n\left(\omega; \theta\right)$ with respect to $\theta$ from Assumption 5, and the second sum converges to zero according to Lemma~\ref{lemma=boundOnVariance}.
It follows that
\begin{equation}
\frac{1}{n}{{H}}\left({\theta}'\right)-\frac{1}{n}{{H}}\left({\theta}\right)=o_P(1).
\label{s:Hessian}
\end{equation}

Starting from~\eqref{s:ball}, using Lemma~\ref{prop:rate}, and substituting in~\eqref{randomexp} and~\eqref{s:Hessian}
we obtain
\begin{align*}
 \hat{{\theta}}-{{\theta}}&=-\left\{\frac{1}{n}{{H}}\left({\theta}\right)+o_P(1)\right\}^{-1}\frac{1}{n}\nabla\ell_D\left({\theta}\right)\nonumber\\
&=-\left\{{\cal H}_n({{\theta}})+{\cal O}_P\left(n^{-1/2}\right)+o_P(1)\right\}^{-1}\left\{{\cal O}_P\left(n^{-1/2}\right)\right\}\nonumber\\
&={\cal H}_n({{\theta}})^{-1}\left\{{I}+o_P(1)+{\cal O}_P\left(n^{-1/2}\right)\right\} \left\{{\cal O}_P\left(n^{-1/2}\right)\right\}\nonumber\\
&={\cal O}_P\left(n^{-1/2}\right),
\end{align*} 
which yields the result we require.
 \end{proof}


\begin{thebibliography}{33}
\expandafter\ifx\csname natexlab\endcsname\relax\def\natexlab#1{#1}\fi

\bibitem[{Anitescu et~al.(2016)Anitescu, Chen \& Stein}]{anitescu2016inversion}
\textsc{Anitescu, M.}, \textsc{Chen, J.} \& \textsc{Stein, M.~L.} (2016).
\newblock An inversion-free estimating equations approach for {G}aussian
  process models.
\newblock \textit{J. Comput. Graph. Stat.} \textbf{26}, 98--107.

\bibitem[{Anitescu et~al.(2012)Anitescu, Chen \& Wang}]{anitescu2012matrix}
\textsc{Anitescu, M.}, \textsc{Chen, J.} \& \textsc{Wang, L.} (2012).
\newblock A matrix-free approach for solving the parametric {G}aussian process
  maximum likelihood problem.
\newblock \textit{SIAM J. Sci. Comput.} \textbf{34}, A240--A262.

\bibitem[{Bloomfield(1976)}]{bloomfield2004fourier}
\textsc{Bloomfield, P.} (1976).
\newblock \textit{Fourier analysis of time series: an introduction}.
\newblock John Wiley.

\bibitem[{Brillinger(2001)}]{brillinger2001time}
\textsc{Brillinger, D.~R.} (2001).
\newblock \textit{Time series: data analysis and theory}.
\newblock S{IAM}.

\bibitem[{Brockwell \& Davis(1991)}]{brockwell2009time}
\textsc{Brockwell, P.~J.} \& \textsc{Davis, R.~A.} (1991).
\newblock \textit{Time series: theory and methods}.
\newblock Springer.

\bibitem[{Choudhuri et~al.(2004)Choudhuri, Ghosal \&
  Roy}]{choudhuri2004contiguity}
\textsc{Choudhuri, N.}, \textsc{Ghosal, S.} \& \textsc{Roy, A.} (2004).
\newblock Contiguity of the {W}hittle measure for a {G}aussian time series.
\newblock \textit{Biometrika} \textbf{91}, 211--218.

\bibitem[{Contreras-Cristan et~al.(2006)Contreras-Cristan,
  Guti{\'e}rrez-Pe{\~n}a \& Walker}]{contreras2006note}
\textsc{Contreras-Cristan, A.}, \textsc{Guti{\'e}rrez-Pe{\~n}a, E.} \&
  \textsc{Walker, S.~G.} (2006).
\newblock A note on {W}hittle's likelihood.
\newblock \textit{Commun. Stat.-Simul. C.} \textbf{35}, 857--875.

\bibitem[{Dahlhaus(1988)}]{dahlhaus1988small}
\textsc{Dahlhaus, R.} (1988).
\newblock Small sample effects in time series analysis: A new asymptotic theory
  and a new estimate.
\newblock \textit{Ann. Stat.} \textbf{16}, 808--841.

\bibitem[{Dutta \& Mondal(2015)}]{dutta2015h}
\textsc{Dutta, S.} \& \textsc{Mondal, D.} (2015).
\newblock An h-likelihood method for spatial mixed linear models based on
  intrinsic auto-regressions.
\newblock \textit{J. R. Statist. Soc. B} \textbf{77}, 699--726.

\bibitem[{Dzhaparidze \& Yaglom(1983)}]{dzhaparidze1983spectrum}
\textsc{Dzhaparidze, K.~O.} \& \textsc{Yaglom, A.~M.} (1983).
\newblock Spectrum parameter estimation in time series analysis.
\newblock In \textit{Developments in Statistics}, P.~R. Krishnaiah, ed.
  Academic Press, Inc., pp. 1--96.

\bibitem[{Elipot et~al.(2016)Elipot, Lumpkin, Perez, Lilly, Early \&
  Sykulski}]{elipot2016global}
\textsc{Elipot, S.}, \textsc{Lumpkin, R.}, \textsc{Perez, R.~C.},
  \textsc{Lilly, J.~M.}, \textsc{Early, J.~J.} \& \textsc{Sykulski, A.~M.}
  (2016).
\newblock A global surface drifter data set at hourly resolution.
\newblock \textit{J. Geophys. Res. Oceans} \textbf{121}, 2937--2966.

\bibitem[{Fan et~al.(2014)Fan, Qi \& Xiu}]{fan2014quasi}
\textsc{Fan, J.}, \textsc{Qi, L.} \& \textsc{Xiu, D.} (2014).
\newblock Quasi-maximum likelihood estimation of {GARCH} models with
  heavy-tailed likelihoods.
\newblock \textit{J. Bus. Econ. Stat.} \textbf{32}, 178--191.

\bibitem[{Fuentes(2007)}]{fuentes2007approximate}
\textsc{Fuentes, M.} (2007).
\newblock Approximate likelihood for large irregularly spaced spatial data.
\newblock \textit{J. Am. Stat. Soc.} \textbf{102}, 321--331.

\bibitem[{Giraitis \& Koul(2013)}]{giraitis2013asymptotic}
\textsc{Giraitis, L.} \& \textsc{Koul, H.~L.} (2013).
\newblock On asymptotic distributions of weighted sums of periodograms.
\newblock \textit{Bernoulli} \textbf{19}, 2389--2413.

\bibitem[{Griffa et~al.(2007)Griffa, Kirwan, Mariano, {\"O}zg{\"o}kmen \&
  Rossby}]{griffa2007lagrangian}
\textsc{Griffa, A.}, \textsc{Kirwan, A.~D.}, \textsc{Mariano, A.~J.},
  \textsc{{\"O}zg{\"o}kmen, T.} \& \textsc{Rossby, T.} (2007).
\newblock \textit{{L}agrangian analysis and prediction of coastal and ocean
  dynamics}.
\newblock Cambridge University Press.

\bibitem[{Guillaumin et~al.(2017)Guillaumin, Sykulski, Olhede, Early \&
  Lilly}]{guillaumin2017analysis}
\textsc{Guillaumin, A.~P.}, \textsc{Sykulski, A.~M.}, \textsc{Olhede, S.~C.},
  \textsc{Early, J.~J.} \& \textsc{Lilly, J.~M.} (2017).
\newblock Analysis of non-stationary modulated time series with applications to
  oceanographic surface flow measurements.
\newblock \textit{Journal of Time Series Analysis} \textbf{38}, 668--710.

\bibitem[{Guinness \& Fuentes(2017)}]{guinness2017circulant}
\textsc{Guinness, J.} \& \textsc{Fuentes, M.} (2017).
\newblock Circulant embedding of approximate covariances for inference from
  {G}aussian data on large lattices.
\newblock \textit{J. Comput. Graph. Stat.} \textbf{26}, 88--97.

\bibitem[{Isserlis(1918)}]{isserlis1918formula}
\textsc{Isserlis, L.} (1918).
\newblock On a formula for the product-moment coefficient of any order of a
  normal frequency distribution in any number of variables.
\newblock \textit{Biometrika} \textbf{12}, 134--139.

\bibitem[{Jesus \& Chandler(2017)}]{jesus2017inference}
\textsc{Jesus, J.} \& \textsc{Chandler, R.~E.} (2017).
\newblock Inference with the {W}hittle likelihood: A tractable approach using
  estimating functions.
\newblock \textit{J. Time Ser. Anal.} \textbf{38}, 204--224.

\bibitem[{Krafty \& Collinge(2013)}]{krafty2013penalized}
\textsc{Krafty, R.~T.} \& \textsc{Collinge, W.~O.} (2013).
\newblock Penalized multivariate {W}hittle likelihood for power spectrum
  estimation.
\newblock \textit{Biometrika} \textbf{100}, 447--458.

\bibitem[{Lee \& Messerschmitt(1988)}]{lee1988digital}
\textsc{Lee, E.~A.} \& \textsc{Messerschmitt, D.~G.} (1988).
\newblock \textit{Digital communication}.
\newblock Kluwer Academic Publishers, Boston.

\bibitem[{Lilly et~al.(2017)Lilly, Sykulski, Early \&
  Olhede}]{lilly2016fractional}
\textsc{Lilly, J.~M.}, \textsc{Sykulski, A.~M.}, \textsc{Early, J.~J.} \&
  \textsc{Olhede, S.~C.} (2017).
\newblock Fractional {B}rownian motion, the {M}at\'ern process, and stochastic
  modeling of turbulent dispersion.
\newblock \textit{Nonlinear Proc. Geoph.} \textbf{24}, 481--Äì514.

\bibitem[{Marple(1999)}]{marple1999computing}
\textsc{Marple, S.~L.} (1999).
\newblock Computing the discrete-time ``analytic'' signal via {FFT}.
\newblock \textit{IEEE T. Signal Proces.} \textbf{47}, 2600--2603.

\bibitem[{Mat{\'e}rn(1960)}]{matern1960spatial}
\textsc{Mat{\'e}rn, B.} (1960).
\newblock \textit{Spatial Variation: Stochastic Models and Their Application to
  Some Problems in Forest Surveys and Other Sampling Investigations}.
\newblock Statens Skogsforskningsinstitut.

\bibitem[{Matsuda \& Yajima(2009)}]{matsuda2009fourier}
\textsc{Matsuda, Y.} \& \textsc{Yajima, Y.} (2009).
\newblock Fourier analysis of irregularly spaced data on ${R}^d$.
\newblock \textit{J. R. Statist. Soc. B} \textbf{71}, 191--217.

\bibitem[{Papoulis(1991)}]{papoulis1991probability}
\textsc{Papoulis, A.} (1991).
\newblock \textit{Probability, random variables, and stochastic processes}.
\newblock McGraw-Hill, Inc.

\bibitem[{Pawlowicz et~al.(2002)Pawlowicz, Beardsley \&
  Lentz}]{pawlowicz2002classical}
\textsc{Pawlowicz, R.}, \textsc{Beardsley, B.} \& \textsc{Lentz, S.} (2002).
\newblock Classical tidal harmonic analysis including error estimates in
  {MATLAB} using {T\_TIDE}.
\newblock \textit{Computers \& Geosciences} \textbf{28}, 929--937.

\bibitem[{Percival \& Walden(1993)}]{percival1993spectral}
\textsc{Percival, D.~B.} \& \textsc{Walden, A.~T.} (1993).
\newblock \textit{Spectral Analysis for Physical Applications: Multitaper and
  conventional univariate techniques}.
\newblock Cambridge University Press.

\bibitem[{Schreier \& Scharf(2010)}]{schreier2010statistical}
\textsc{Schreier, P.~J.} \& \textsc{Scharf, L.~L.} (2010).
\newblock \textit{Statistical signal processing of complex-valued data: the
  theory of improper and noncircular signals}.
\newblock Cambridge University Press.

\bibitem[{Serroukh \& Walden(2000)}]{serroukh2000wavelet}
\textsc{Serroukh, A.} \& \textsc{Walden, A.~T.} (2000).
\newblock Wavelet scale analysis of bivariate time series ii: statistical
  properties for linear processes.
\newblock \textit{J. Nonparametr. Stat.} \textbf{13}, 37--56.

\bibitem[{Slepian \& Pollak(1961)}]{slepian1961prolate}
\textsc{Slepian, D.} \& \textsc{Pollak, H.~O.} (1961).
\newblock Prolate spheroidal wave functions, {F}ourier analysis and
  uncertainty--{I}.
\newblock \textit{Bell Syst. Tech. J.} \textbf{40}, 43--63.

\bibitem[{Stein \& Shakarchi(2003)}]{stein2011fourier}
\textsc{Stein, E.~M.} \& \textsc{Shakarchi, R.} (2003).
\newblock \textit{Fourier analysis: an introduction}.
\newblock Princeton University Press.

\bibitem[{Sykulski et~al.(2016)Sykulski, Olhede, Lilly \&
  Danioux}]{sykulski2016lagrangian}
\textsc{Sykulski, A.~M.}, \textsc{Olhede, S.~C.}, \textsc{Lilly, J.~M.} \&
  \textsc{Danioux, E.} (2016).
\newblock Lagrangian time series models for ocean surface drifter trajectories.
\newblock \textit{J. R. Statist. Soc. C} \textbf{65}, 29--50.

\bibitem[{Taniguchi(1979)}]{taniguchi1979estimation}
\textsc{Taniguchi, M.} (1979).
\newblock On estimation of parameters of gaussian stationary processes.
\newblock \textit{J. Appl. Probab.} \textbf{16}, 575--591.

\bibitem[{Taniguchi(1983)}]{taniguchi1983second}
\textsc{Taniguchi, M.} (1983).
\newblock On the second order asymptotic efficiency of estimators of {G}aussian
  {ARMA} processes.
\newblock \textit{Ann. Stat.} \textbf{11}, 157--169.

\bibitem[{Thomson(1982)}]{thomson1982spectrum}
\textsc{Thomson, D.~J.} (1982).
\newblock Spectrum estimation and harmonic analysis.
\newblock \textit{Proc. IEEE} \textbf{70}, 1055--1096.

\bibitem[{Velasco \& Robinson(2000)}]{velasco2000whittle}
\textsc{Velasco, C.} \& \textsc{Robinson, P.~M.} (2000).
\newblock Whittle pseudo-maximum likelihood estimation for nonstationary time
  series.
\newblock \textit{J. Am. Stat. Soc.} \textbf{95}, 1229--1243.

\bibitem[{Whittle(1953)}]{whittle1953estimation}
\textsc{Whittle, P.} (1953).
\newblock Estimation and information in stationary time series.
\newblock \textit{Ark. Mat.} \textbf{2}, 423--434.

\end{thebibliography}
\end{document}